\documentclass[12pt,a4paper]{article}

\usepackage{fullpage}
\usepackage[utf8]{inputenc}

\usepackage{amssymb}
\usepackage{amsmath}
\usepackage{amsthm}
\usepackage[round]{natbib}
\usepackage[breaklinks=true]{hyperref}
\usepackage{url}
\usepackage{graphicx}
\usepackage{multirow}
\usepackage{setspace}
\usepackage{bm}

\usepackage{booktabs,rotating}
\usepackage{multirow}

\newcommand{\Rset}{\mathbb{R}}

\newcommand{\dd}{\mathrm{d}}
\newcommand{\Cb}{\mathbb{C}}
\newcommand{\CC}{\mathcal{C}}
\newcommand{\Db}{\mathbb{D}}
\newcommand{\Eb}{\mathbb{E}}
\newcommand{\EV}{\mathcal{EV}}
\renewcommand{\vec}{\bm}
\newcommand{\1}{\mathbf{1}}
\newcommand{\norm}[1]{\|{#1}\|_{\infty}}

\newtheorem{prop}{Proposition}
\newtheorem{lem}{Lemma}

\parskip 5pt

\title{Large-sample tests of extreme-value dependence for multivariate copulas}

    \author{
      Ivan Kojadinovic\\ 
      \small{Laboratoire de mathématiques et applications, UMR CNRS 5142} \\
      \small{Université de Pau et des Pays de l'Adour} \\
      \small{B.P. 1155, 64013 Pau Cedex, France} \\
      \small{\texttt{ivan.kojadinovic@univ-pau.fr}}
      \and
      Johan Segers\\ 
      \small{Institut de statistique, biostatistique et sciences actuarielles} \\
      \small{Université catholique de Louvain} \\
      \small{Voie du Roman Pays 20, B-1348 Louvain-la-Neuve, Belgium} \\
      \small{\texttt{johan.segers@uclouvain.be}}
      \and
      Jun Yan\\ 
      \small{Department of Statistics} \\
      \small{University of Connecticut, 215 Glenbrook Rd. U-4120} \\
      \small{Storrs, CT 06269, USA} \\
      \small{\texttt{jun.yan@uconn.edu}}
    }

\date{}

\begin{document}
\maketitle    


\begin{abstract}
Starting from the characterization of extreme-value copulas based on max-stability, large-sample tests of extreme-value dependence for multivariate copulas are studied. The two key ingredients of the proposed tests are the empirical copula of the data and a multiplier technique for obtaining approximate $p$-values for the derived statistics. The asymptotic validity of the multiplier approach is established, and the finite-sample performance of a large number of candidate test statistics is studied through extensive Monte Carlo experiments for data sets of dimension two to five. In the bivariate case, the rejection rates of the best versions of the tests are compared with those of the test of  \cite{GhoKhoRiv98} recently revisited by \cite{BenGenNes09}. The proposed procedures are illustrated on bivariate financial data and trivariate geological data.

\medskip

\noindent {\it Keywords:} max-stability, multiplier central limit theorem, pseudo-observations, ranks.

\end{abstract}


\section{Introduction}

Let $\vec X$ be a $d$-dimensional random vector with continuous marginal cumulative distribution functions (c.d.f.s) $F_1,\dots,F_d$. It is then well-known from the work of \cite{Skl59} that the c.d.f.\ $F$ of $\vec X$ can be written in a unique way as
$$
F(\vec x) = C \{ F_1(x_1),\dots,F_d(x_d) \}, \qquad \vec x \in \Rset^d,
$$
where the function $C:[0,1]^d \to [0,1]$ is a copula and can be regarded as capturing the dependence between the components of $\vec X$. If, additionally, $C$ is {\em max-stable}, i.e., if
\begin{equation}
\label{ev}
C(\vec u) = \{C(u_1^{1/r},\dots,u_d^{1/r})\}^r, \qquad \forall \, \vec u \in [0,1]^d, \qquad \forall \, r > 0, 
\end{equation}
the function $C$ is an {\em extreme-value} copula. Such copulas arise in the limiting joint distributions of suitably normalized componentwise maxima \citep{Gal87,GudSeg10} and are the subject of increasing practical interest in finance \citep{McNFreEmb05}, insurance \citep{FreVal98} and hydrology \citep{SalDeMKotRos07}.

Given a random sample $\vec X_1,\dots,\vec X_n$ from c.d.f.\ $C \{ F_1(x_1),\dots,F_d(x_d) \}$, it is of interest in many applications to test whether the unknown copula $C$ belongs to the class of extreme-value copulas. A first solution to this problem was proposed in the bivariate case by \cite{GhoKhoRiv98} who derived a test based on the bivariate probability integral transformation. The suggested approach was recently improved by \cite{BenGenNes09} who investigated the finite-sample performance of three versions of the test.

The aim of this paper is to study tests of extreme-value dependence for multivariate copulas based on characterization~(\ref{ev}). The first key element of the proposed approach is the empirical copula of the data which is a nonparametric estimator of the true unknown copula. Starting from characterization~(\ref{ev}), the empirical copula can be used to derive natural classes of empirical processes for testing max-stability. As the distribution of these processes is unwieldy, one has to resort to a multiplier technique to compute approximate $p$-values for candidate test statistics. This is the second key element of the proposed approach and is based on the seminal work of \cite{Sca05} and \cite{RemSca09}, revisited recently in \cite{Seg11}. The outcome of this work is a general procedure for testing extreme-value dependence which, in principle, can be used in any dimension. 

The second section of the paper is devoted to recent results on the weak convergence of the empirical copula process obtained in \cite{Seg11}. A detailed and rigorous description of the proposed tests is given in Section~\ref{description}, while their implementation is discussed in Section~\ref{implementation}. In the fifth section, the results of an extensive Monte Carlo study are partially reported. They are used to provide recommendations in Section~\ref{illustrations} enabling the proposed approach to be safely used to test extreme-value dependence in data sets of dimension two to five. The test based on one of the best performing statistics is finally used to test bivariate extreme-value dependence in the well-known insurance data of \cite{FreVal98}, and trivariate extreme-value dependence in the uranium exploration data of \citet{CooJoh86}. 

The following notational conventions are adopted in the sequel. The arrow~`$\leadsto$' denotes weak convergence in the sense of Definition~1.3.3 in \cite{vanWel96}, and $\ell^\infty([0,1]^d)$ represents the space of all bounded real-valued functions on $[0,1]^d$ equipped with the uniform metric. Also, for any $\vec u \in [0,1]^d$ and any $r > 0$, we adopt the notation $\vec u^r = (u_1^r,\dots,u_d^r)$. Furthermore, the set of extreme-value copulas, i.e., copulas satisfying~(\ref{ev}), is denoted by $\EV$.

Note finally that all the tests studied in this work are implemented in the R package {\tt copula} \citep{KojYan10} available on the Comprehensive R Archive Network \citep{Rsystem}.

\section{Weak convergence of the empirical copula process}
\label{empirical_copula_process}

Let $\vec{\hat U}_i = (\hat U_{i1},\dots, \hat U_{id})$, $i \in \{1,\dots,n\}$, be pseudo-observations from the copula $C$ computed from the data by $\hat U_{ij} = R_{ij}/(n+1)$, where $R_{ij}$ is the rank of $X_{ij}$ among $X_{1j},\dots,X_{nj}$. The pseudo-observations can equivalently be rewritten as $\hat U_{ij} = n \hat F_j(X_{ij})/(n+1)$, where $\hat F_j$ is the empirical c.d.f.\ computed from $X_{1j},\dots,X_{nj}$, and where the scaling factor $n/(n+1)$ is classically introduced to avoid problems at the boundary of~$[0,1]^d$. The proposed tests are based on the empirical copula of the data \citep{Deh79,Deh81}, which is usually defined as the empirical c.d.f.\ computed from the pseudo-observations, i.e.,
$$
C_n(\vec u) = \frac{1}{n} \sum_{i=1}^n \1 ( \vec{\hat U}_i \leq \vec u ), \qquad \vec u \in [0,1]^d.
$$

For any $j \in \{1,\dots,d\}$, let $C^{[j]}(\vec u)$ be the partial derivative of $C$ with respect to its $j$th argument at $\vec u$, i.e.,
$$
C^{[j]}(\vec u) = \lim_{h \to 0 \\ \atop u_j + h \in [0,1]} \frac{C(u_1,\dots,u_{j-1},u_j + h,u_{j+1},\dots,u_d) - C(\vec u)}{h},\qquad \vec u \in [0,1]^d.
$$
It is well-known \citep[see e.g.,][Theorem 2.2.7]{Nel06} that $C^{[j]}$ exists almost everywhere on $[0,1]^d$ and that, for those $\vec u \in [0,1]^d$ for which it exists, $0 \leq C^{[j]}(\vec u) \leq 1$. If $C^{[j]}(\vec u)$ exists and is continuous on $[0,1]^d$ for all $j \in \{1,\dots,d\}$, then, from Corollary~5.3 of \citet{vanWel07} \citep[see also][]{Stu84,GanStu87,FerRadWeg04,Tsu05}, the empirical copula process $\Cb_n = \sqrt{n} ( C_n - C )$ converges weakly in $\ell^\infty([0,1]^d)$ to the tight centered Gaussian process 
\begin{equation}
\label{ec}
\Cb(\vec u) = \alpha(\vec u) - \sum_{j=1}^d C^{[j]}(\vec u) \alpha(1,\dots,1,u_j,1,\dots,1), \qquad \vec u \in [0,1]^d,
\end{equation}
where $\alpha$ is a $C$-Brownian bridge, i.e., a tight centered Gaussian process on $[0,1]^d$ with covariance function $E[\alpha(\vec u)\alpha(\vec v)] = C(\vec u \wedge \vec v) -  C(\vec u)C(\vec v)$, $\vec u, \vec v \in [0,1]^d$. Without loss of generality, we assume in the sequel that $\alpha$ has continuous sample paths.

For many copula families however, the partial derivatives $C^{[j]}$, $j \in \{1,\dots,d\}$, fail to be continuous on the whole of $[0,1]^d$. For instance, as shown in \citet{Seg11}, many popular bivariate extreme-value copulas have discontinuous partial derivatives at $(0,0)$ and $(1,1)$. To deal with such situations, \cite{Seg11} considered the following less restrictive condition:
\begin{description}
\item[($\CC$)]
for any $j \in \{1,\dots,d\}$, $C^{[j]}$ exists and is continuous on the set $V_j = \{\vec u \in [0,1]^d : 0 < u_j < 1 \}$.
\end{description}
Under Condition ($\CC$), for any $j \in \{1,\dots,d\}$, \cite{Seg11} extended the domain of $C^{[j]}$ to the whole of $[0,1]^d$ by setting
$$
C^{[j]}(\vec u) = \left\{
\begin{array}{ll}
\displaystyle{\limsup_{h \downarrow 0} \frac{C(u_1,\dots,u_{j-1},h,u_{j+1},\dots,u_d)}{h}}, & \mbox{if } \vec u \in [0,1]^d, \, u_j = 0, \\
\displaystyle{\limsup_{h \downarrow 0} \frac{C(\vec u) - C(u_1,\dots,u_{j-1},1-h,u_{j+1},\dots,u_d)}{h}}, & \mbox{if } \vec u \in [0,1]^d, \, u_j = 1,
\end{array}
\right. 
$$
which ensures that the process $\Cb$ defined in~(\ref{ec}) is well-defined on the whole of $[0,1]^d$, and showed the weak convergence of the empirical copula process $\Cb_n$ to $\Cb$ in $\ell^\infty([0,1]^d)$. Condition ($\CC$) was verified in \cite{Seg11} for many popular families including many $d$-dimensional extreme-value copulas.

\section{Description of the tests}
\label{description}

Starting from characterization~(\ref{ev}) and having at hand a nonparametric copula estimator such as $C_n$, it seems natural to base tests of the hypothesis $H_0: C \in \EV$ on processes of the form 
\begin{equation}
\label{test_process}
\Db_{r,n}(\vec u) = \sqrt{n} \left[ \{ C_n(\vec u^{1/r}) \}^r - C_n(\vec u) \right], \qquad \vec u \in [0,1]^d,
\end{equation}
where $r > 0$. Alternatively, since characterization~(\ref{ev}) can equivalently be rewritten as 
$$
\{C(\vec u)\}^r = C(\vec u^r), \qquad \forall \, \vec u \in [0,1]^d, \qquad \forall \, r > 0, 
$$
one could also consider test processes of the form
$$
\Eb_{r,n}(\vec u) = \sqrt{n} \left[ C_n(\vec u^r)  - \{C_n(\vec u)\}^r \right], \qquad \vec u \in [0,1]^d.
$$ 
For a given value of $r$, such processes can be used to test the hypothesis
$$
H_{0,r}: C(\vec u) = \{ C(\vec u^{1/r}) \}^r \quad \forall \, \vec u \in [0,1]^d.
$$
Since $H_0 = \bigcap_{r > 0} H_{0,r}$, testing $H_{0,r}$ for a fixed value of $r$ is clearly not equivalent to testing~$H_0$. It follows that tests based on $\Db_{r,n}$ or $\Eb_{r,n}$, with $r$ fixed, will only be consistent for copula alternatives for which there exists $\vec u \in [0,1]^d$ such that $C(\vec u) \neq \{ C(\vec u^{1/r}) \}^r$. 

In our Monte Carlo experiments, values of $r$ smaller than one did not lead to well-behaved tests. Besides, the processes $\Db_{r,n}$ always led to consistently more powerful tests than the processes $\Eb_{r,n}$. For the sake of brevity, we therefore only present the derivation of the tests based on $\Db_{r,n}$ with $r \geq 1$.


The following result, whose short proof is given in Appendix~\ref{proofs}, gives the asymptotic behavior of the test process~(\ref{test_process}) under $H_{0,r}$.

\begin{prop}
\label{limitH0}
Suppose that the partial derivatives of $C$ satisfy Condition ($\CC$), and let $r \geq 1$. Then, under $H_{0,r}$, 
\begin{equation}
\label{Db_r}
\Db_{r,n}(\vec u) \leadsto \Db_r(\vec u) = r \{ C(\vec u^{1/r}) \}^{r-1}  \Cb(\vec u^{1/r}) - \Cb(\vec u) 
\end{equation}
in $\ell^\infty([0,1]^d)$.
\end{prop}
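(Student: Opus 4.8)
\noindent\emph{Proof strategy.} The plan is to reduce the statement to the weak convergence of the empirical copula process $\Cb_n = \sqrt n(C_n - C)$ to $\Cb$ in $\ell^\infty([0,1]^d)$, which holds under Condition~($\CC$) as recalled in Section~\ref{empirical_copula_process}. Under $H_{0,r}$ one has $C(\vec u) = \{C(\vec u^{1/r})\}^r$ for all $\vec u$, so that
\[
\Db_{r,n}(\vec u) = \sqrt n\left[\{C_n(\vec u^{1/r})\}^r - \{C(\vec u^{1/r})\}^r\right] - \Cb_n(\vec u),
\]
and the only nonlinear ingredient is the map $t \mapsto t^r$ applied to $C_n(\vec u^{1/r})$. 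I would therefore linearize this map and recognize the leading term as a continuous linear image of $\Cb_n$. (One could also phrase this through the functional delta method, but the direct linearization below is cleaner because it does not require the limit $\Cb$ to have continuous sample paths, which may fail on the boundary of the unit cube under Condition~($\CC$) alone.)

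Since $r \geq 1$, the function $t \mapsto t^r$ is continuously differentiable on $[0,1]$, so the mean value theorem provides, for each $\vec u$, a point $\xi_n(\vec u)$ between $C_n(\vec u^{1/r})$ and $C(\vec u^{1/r})$ with $\{C_n(\vec u^{1/r})\}^r - \{C(\vec u^{1/r})\}^r = r\,\xi_n(\vec u)^{r-1}\{C_n(\vec u^{1/r}) - C(\vec u^{1/r})\}$. This gives the decomposition $\Db_{r,n} = T_r(\Cb_n) + R_n$, where $T_r(f)(\vec u) = r\{C(\vec u^{1/r})\}^{r-1} f(\vec u^{1/r}) - f(\vec u)$ and $R_n(\vec u) = r\big[\xi_n(\vec u)^{r-1} - \{C(\vec u^{1/r})\}^{r-1}\big]\Cb_n(\vec u^{1/r})$. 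The operator $T_r$ is bounded and linear on $\ell^\infty([0,1]^d)$: the substitution $f \mapsto f(\cdot^{1/r})$ is linear of norm one since $\vec u \mapsto \vec u^{1/r}$ maps $[0,1]^d$ into itself, and multiplication by $\vec u \mapsto r\{C(\vec u^{1/r})\}^{r-1}$ is bounded by $r$ (here $r \geq 1$ ensures $t \mapsto t^{r-1}$ is bounded on $[0,1]$). Hence, by the continuous mapping theorem and $\Cb_n \leadsto \Cb$, one gets $T_r(\Cb_n) \leadsto T_r(\Cb)$ in $\ell^\infty([0,1]^d)$, and $T_r(\Cb) = \Db_r$ by~(\ref{Db_r}). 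It would then remain to show $\norm{R_n} = o_{\mathrm P}(1)$ and conclude via Slutsky's lemma (the asymptotic-equivalence lemma for weak convergence in metric spaces).

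The one step requiring care — and the main obstacle — is the uniform control of $R_n$ over the whole of $[0,1]^d$, in particular near the faces where $C$ vanishes. Here I would use that $\xi_n(\vec u)$ lies between $C_n(\vec u^{1/r})$ and $C(\vec u^{1/r})$, so $\sup_{\vec u}|\xi_n(\vec u) - C(\vec u^{1/r})| \leq \norm{C_n - C}$, together with the fact that, because $r \geq 1$, the function $t \mapsto t^{r-1}$ is uniformly continuous on the compact interval $[0,1]$: denoting by $\omega$ its modulus of continuity there, $\sup_{\vec u}|\xi_n(\vec u)^{r-1} - \{C(\vec u^{1/r})\}^{r-1}| \leq \omega(\norm{C_n - C})$, whence $\norm{R_n} \leq r\,\omega(\norm{C_n - C})\,\norm{\Cb_n}$. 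Since $\Cb_n \leadsto \Cb$ forces $\norm{\Cb_n} = O_{\mathrm P}(1)$ and $\norm{C_n - C} = n^{-1/2}\norm{\Cb_n} = o_{\mathrm P}(1)$, continuity of $\omega$ at $0$ gives $\norm{R_n} = o_{\mathrm P}(1)$, completing the proof. It is precisely this bound that breaks down for $0 < r < 1$, where $t \mapsto t^{r-1}$ is unbounded near $0$ — consistent with the restriction to $r \geq 1$ and with the empirical observation that values of $r$ smaller than one do not yield well-behaved tests.
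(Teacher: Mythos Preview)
Your proof is correct and follows essentially the same route as the paper's: both linearize the map $t\mapsto t^r$ around $C(\vec u^{1/r})$ and then appeal to the continuous mapping theorem together with a Slutsky-type argument. The paper's proof is much terser—it records the joint weak convergence in~(\ref{limit}) by invoking the weak convergence of $\Cb_n$ and ``the functional version of Slutsky's theorem'' in one stroke, then subtracts—whereas you spell out explicitly the bounded linear operator $T_r$ and the remainder bound $\norm{R_n}\le r\,\omega(\norm{C_n-C})\,\norm{\Cb_n}$ via uniform continuity of $t\mapsto t^{r-1}$ on $[0,1]$; this added detail is exactly what justifies the paper's brief Slutsky step and, as you note, makes transparent why the restriction $r\ge 1$ is needed.
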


Before suggesting two candidate test statistics based on $\Db_{r,n}$, let us first explain how, for large $n$, approximate independent copies of $\Db_r$ can be obtained by means of a multiplier technique initially proposed in \citet{Sca05} and \cite{RemSca09}, and recently revisited in \cite{Seg11}.

As can be seen from~(\ref{Db_r}), to obtain approximate independent copies of $\Db_r$, it is necessary to obtain approximate independent copies of $\Cb$. To estimate the unknown partial derivative $C^{[j]}$, $j \in \{1,\dots,d\}$, appearing in the expression of $\Cb$ given in~(\ref{ec}), we use the estimator defined by
\begin{multline}
\label{estimator}
C^{[j]}_n(\vec u) = \frac{1}{u_{j,n}^+ - u_{j,n}^-}  \left\{ C_n (u _1,\dots,u_{j-1},u_{j,n}^+,u_{j+1},\dots,u_d ) \right. \\ \left. \mbox{}- C_n ( u_1,\dots,u_{j-1},u_{j,n}^-,u_{j+1},\dots,u_d ) \right\}, \qquad \vec u \in [0,1]^d,
\end{multline}
where $u_{j,n}^+ = (u_j + n^{-1/2}) \wedge 1$, and $u_{j,n}^- = (u_j - n^{-1/2}) \vee  0$.

This estimator differs slightly from the one initially proposed in \cite{RemSca09}. It has the advantage of converging in probability to $C^{[j]}$ uniformly over $[0,1]^d$ if $C^{[j]}$ happens to be continuous on $[0,1]^d$ instead of only satisfying Condition ($\CC$). This point is discussed in more detail in Appendix~\ref{estimators_partial_deriv}.

Let us now introduce additional notation. Let $N$ be a large integer and let $Z_i^{(k)}$, $i \in \{1,\dots,n\}$, $k \in \{1,\dots,N\}$, be i.i.d.\ random variables with mean 0 and variance 1 satisfying $\int_0^\infty \{ \Pr(|Z_i^{(k)}| > x) \}^{1/2} \dd x < \infty$, and independent of the data $\vec X_1,\dots,\vec X_n$. For any $k \in \{1,\dots,N\}$ and any $\vec u \in [0,1]^d$, let
$$
\alpha_n^{(k)}(\vec u) = \frac{1}{\sqrt{n}} \sum_{i=1}^n Z_i^{(k)} \left\{ \1(\vec{\hat U}_i \leq \vec u) - C_n(\vec u) \right\} = \frac{1}{\sqrt{n}} \sum_{i=1}^n (Z_i^{(k)} - \bar Z^{(k)}) \1(\vec{\hat U}_i \leq \vec u),
$$
where $\bar Z^{(k)} = n^{-1} \sum_{i=1}^n Z_i^{(k)}$. Furthermore, for any $k \in \{1,\dots,N\}$ and any $\vec u \in [0,1]^d$, let
$$
\Cb_n^{(k)}(\vec u) = \alpha_n^{(k)}(\vec u) - \sum_{j=1}^d C^{[j]}_n(\vec u) \alpha_n^{(k)}(1,\dots,1,u_j,1,\dots,1),
$$
and let 
$$
\Db_{r,n}^{(k)}(\vec u) = r \{ C_n(\vec u^{1/r}) \}^{r-1} \Cb_n^{(k)}(\vec u^{1/r}) -  \Cb_n^{(k)}(\vec u).
$$

The following result, whose proof is given in Appendix~\ref{proofs}, is at the root of the proposed class of tests.

\begin{prop}
\label{multiplier}
Suppose that the partial derivatives of $C$ satisfy Condition ($\CC$), and let $r \geq 1$. Then, under $H_{0,r}$, 
$$
\left(\Db_{r,n},\Db_{r,n}^{(1)}, \dots,\Db_{r,n}^{(N)} \right) \leadsto \left( \Db_r,\Db_r^{(1)},\dots,\Db_r^{(N)} \right)
$$ 
in $\{\ell^\infty([0,1]^d)\}^{(N+1)}$, where $\Db_r^{(1)},\dots,\Db_r^{(N)}$ are independent copies of the process $\Db_r$ defined in~(\ref{Db_r}). 
\end{prop}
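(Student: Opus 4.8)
The plan is to deduce the joint convergence from a joint multiplier central limit theorem for the empirical copula process via a single continuous-mapping step. The key observation is that, under $H_{0,r}$, each of the $N+1$ processes in the statement is, up to a uniformly asymptotically negligible term, the image of the corresponding (empirical or multiplier) copula process under one and the same bounded linear operator
\[
\Phi_r: \ell^\infty([0,1]^d) \to \ell^\infty([0,1]^d), \qquad (\Phi_r f)(\vec u) = r\{C(\vec u^{1/r})\}^{r-1} f(\vec u^{1/r}) - f(\vec u),
\]
which is well defined and bounded because $0 \le C(\vec u^{1/r}) \le 1$ and $r \ge 1$.

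First I would record the joint multiplier CLT for the empirical copula process: under Condition~($\CC$),
\[
\bigl(\Cb_n, \Cb_n^{(1)}, \dots, \Cb_n^{(N)}\bigr) \leadsto \bigl(\Cb, \Cb^{(1)}, \dots, \Cb^{(N)}\bigr) \quad \text{in } \{\ell^\infty([0,1]^d)\}^{N+1},
\]
where $\Cb^{(1)}, \dots, \Cb^{(N)}$ are mutually independent copies of $\Cb$, independent of $\Cb$. This extends the weak convergence $\Cb_n \leadsto \Cb$ recalled in Section~\ref{empirical_copula_process}: it follows from the joint multiplier CLT for the ordinary uniform empirical process — there $(\alpha_n, \alpha_n^{(1)}, \dots, \alpha_n^{(N)})$ converges to $N+1$ independent $C$-Brownian bridges, the independence being forced by the multiplier samples $\{Z_i^{(k)}\}_i$ being i.i.d.\ across $k$ and independent of the data, so that all cross-covariances vanish in the limit — combined with the consistency of $C^{[j]}_n$ for $C^{[j]}$ (uniformly on compact subsets of $V_j$, with the boundary controlled by Condition~($\CC$)) and the continuous-mapping/delta-method argument of \cite{RemSca09} and \cite{Seg11}. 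For the particular estimator~(\ref{estimator}) this is exactly what is established in Appendix~\ref{estimators_partial_deriv}, so I would simply invoke it.

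Next comes the common linearisation. On the one hand, the argument establishing Proposition~\ref{limitH0} shows that, under $H_{0,r}$, a first-order expansion of $x \mapsto x^r$ around $C(\vec u^{1/r})$ — using $\{C(\vec u^{1/r})\}^r = C(\vec u)$ and $\norm{\Cb_n} = O_{\mathrm P}(1)$ — yields $\Db_{r,n} = \Phi_r(\Cb_n) + o_{\mathrm P}(1)$ in $\ell^\infty([0,1]^d)$. On the other hand, directly from the definitions,
\[
\Db_{r,n}^{(k)}(\vec u) - \bigl(\Phi_r(\Cb_n^{(k)})\bigr)(\vec u) = r\bigl(\{C_n(\vec u^{1/r})\}^{r-1} - \{C(\vec u^{1/r})\}^{r-1}\bigr)\,\Cb_n^{(k)}(\vec u^{1/r}),
\]
and here the first factor tends to $0$ uniformly in $\vec u$ in probability, since $\norm{C_n - C} \to 0$ in probability and $x \mapsto x^{r-1}$ is uniformly continuous on $[0,1]$ for $r \ge 1$, while $\norm{\Cb_n^{(k)}} = O_{\mathrm P}(1)$ by the previous step; hence $\Db_{r,n}^{(k)} = \Phi_r(\Cb_n^{(k)}) + o_{\mathrm P}(1)$ for each $k \in \{1,\dots,N\}$. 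Consequently $(\Db_{r,n}, \Db_{r,n}^{(1)}, \dots, \Db_{r,n}^{(N)})$ equals $(\Phi_r(\Cb_n), \Phi_r(\Cb_n^{(1)}), \dots, \Phi_r(\Cb_n^{(N)}))$ plus a term that is $o_{\mathrm P}(1)$ in $\{\ell^\infty([0,1]^d)\}^{N+1}$, and applying the continuous mapping theorem to the coordinatewise action of $\Phi_r$ together with Slutsky's lemma gives convergence to $(\Phi_r(\Cb), \Phi_r(\Cb^{(1)}), \dots, \Phi_r(\Cb^{(N)}))$. Since $\Phi_r(\Cb) = \Db_r$ and $\Cb^{(1)}, \dots, \Cb^{(N)}$ are independent copies of $\Cb$, the processes $\Db_r^{(k)} := \Phi_r(\Cb^{(k)})$ are independent copies of $\Db_r$, which is the claim.

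The one genuinely delicate ingredient is the joint multiplier CLT for $\Cb_n$ in the first step, specifically that it survives the passage to the modified estimator~(\ref{estimator}) under the weak Condition~($\CC$) rather than under continuity of the $C^{[j]}$ on all of $[0,1]^d$ — the difficulty being the behaviour of $C^{[j]}_n$ near the faces $\{u_j \in \{0,1\}\}$, where $C^{[j]}$ may jump. Everything else is routine: the only point requiring a little care is the first-order expansion of $x \mapsto x^r$ when $C(\vec u^{1/r})$ is near $0$, but for $r \ge 1$ the coefficient $\{C(\vec u^{1/r})\}^{r-1}$ stays bounded and the second-order remainder is uniformly $o_{\mathrm P}(n^{-1/2})$, which is precisely what is already handled in the proof of Proposition~\ref{limitH0}.
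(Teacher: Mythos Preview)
Your approach matches the paper's: obtain the joint multiplier CLT $\bigl(\Cb_n,\Cb_n^{(1)},\dots,\Cb_n^{(N)}\bigr) \leadsto \bigl(\Cb,\Cb^{(1)},\dots,\Cb^{(N)}\bigr)$ and then pass to the $\Db_{r,n}$'s via continuous mapping together with $\norm{C_n - C} \to 0$ in probability. Your operator $\Phi_r$ simply makes explicit what the paper compresses into a single sentence.

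There is, however, one ingredient the paper supplies that you do not. To invoke Proposition~3.2 of \cite{Seg11}, the consistency of $C^{[j]}_n$ on compacta of $V_j$ (Lemma~\ref{uniform_convergence} in Appendix~\ref{estimators_partial_deriv}) is not enough by itself: one also needs a uniform bound on $C^{[j]}_n$ over all of $[0,1]^d$, so that the terms $C^{[j]}_n(\vec u)\,\alpha_n^{(k)}(1,\dots,u_j,\dots,1)$ remain controlled near the faces $u_j \in \{0,1\}$ where $C^{[j]}$ may be discontinuous. The paper handles this by a short rank-counting argument showing $\sup_{\vec u} C^{[j]}_n(\vec u) \le 5$ for all $n \ge 1$. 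Your remark that the boundary is ``controlled by Condition~($\CC$)'' is not quite right: Condition~($\CC$) constrains $C^{[j]}$, not the estimator, and it is the combination of the uniform bound on $C^{[j]}_n$ with the vanishing of $\alpha^{(k)}(1,\dots,u_j,\dots,1)$ at $u_j \in \{0,1\}$ that makes the argument of \cite{Seg11} go through. This is precisely the ``delicate ingredient'' you flag, and it is not resolved by Appendix~\ref{estimators_partial_deriv} alone.
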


As candidate test statistics, we consider Cram\'er--von Mises functionals of the form
$$
S_{r,n} = \int_{[0,1]^d} \{\Db_{r,n}(\vec u)\}^2 \dd \vec u,  \qquad \mbox{and} \qquad T_{r,n} = \int_{[0,1]^d} \{\Db_{r,n}(\vec u)\}^2 \dd C_n(\vec u).
$$
Also, for any $k \in \{1,\dots,N\}$, let 
$$
S_{r,n}^{(k)} = \int_{[0,1]^d} \{\Db_{r,n}^{(k)}(\vec u)\}^2 \dd \vec u,  \qquad \mbox{and} \qquad T_{r,n}^{(k)} = \int_{[0,1]^d} \{\Db_{r,n}^{(k)}(\vec u)\}^2 \dd C_n(\vec u).
$$

The following key result is proved in Appendix~\ref{proof_Trn}.

\begin{prop}
\label{T_rn}
Suppose that the partial derivatives of $C$ satisfy Condition ($\CC$), and let $r \geq 1$. Then, under $H_{0,r}$,
$$
\left( S_{r,n},S_{r,n}^{(1)},\dots,S_{r,n}^{(N)} \right) \leadsto \left( S_r,S_r^{(1)},\dots,S_r^{(N)} \right)
$$ 
and
$$
\left( T_{r,n},T_{r,n}^{(1)},\dots,T_{r,n}^{(N)} \right) \leadsto \left( T_r,T_r^{(1)},\dots,T_r^{(N)} \right)
$$
in $[0,\infty)^{(N+1)}$,  where 
$$
S_r = \int_{[0,1]^d} \{\Db_r(\vec u)\}^2 \dd \vec u \qquad \mbox{and} \qquad T_r = \int_{[0,1]^d} \{\Db_r(\vec u)\}^2 \dd C(\vec u) 
$$
are the weak limits of $S_{r,n}$ and $T_{r,n}$, respectively, and $S_r^{(1)},\dots,S_r^{(N)}$ and $T_r^{(1)},\dots,T_r^{(N)}$ are independent copies of $S_r$ and $T_r$, respectively.
\end{prop}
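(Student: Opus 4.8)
The plan is to derive both displayed convergences from the joint process convergence of Proposition~\ref{multiplier} by mapping arguments, treating the Lebesgue-integrated statistics and the $C_n$-integrated statistics separately. First I would upgrade Proposition~\ref{multiplier} so that it also tracks the empirical copula: since the weak convergence $\Cb_n \leadsto \Cb$ recalled in Section~\ref{empirical_copula_process} yields $\norm{C_n - C} = n^{-1/2}\norm{\Cb_n} \to 0$ in probability, and since joint weak convergence holds whenever one coordinate converges in probability to a constant (see, e.g., \citealp{vanWel96}), under $H_{0,r}$ one obtains
\begin{equation*}
\bigl( \Db_{r,n}, \Db_{r,n}^{(1)},\dots,\Db_{r,n}^{(N)}, C_n \bigr) \leadsto \bigl( \Db_r, \Db_r^{(1)},\dots,\Db_r^{(N)}, C \bigr)
\end{equation*}
in $\{\ell^\infty([0,1]^d)\}^{N+1} \times \ell^\infty([0,1]^d)$, the last coordinate of the limit being the deterministic element $C$.

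For the statistics $S_{r,n}$ I would invoke the continuous mapping theorem with the map $(f_0,\dots,f_N) \mapsto \bigl( \int_{[0,1]^d} f_0^2\,\dd\vec u,\dots,\int_{[0,1]^d} f_N^2\,\dd\vec u \bigr)$ from $\{\ell^\infty([0,1]^d)\}^{N+1}$ to $[0,\infty)^{N+1}$, which is continuous since $\bigl|\int_{[0,1]^d} f^2\,\dd\vec u - \int_{[0,1]^d} g^2\,\dd\vec u\bigr| \le \norm{f-g}\,(\norm{f}+\norm{g})$. This gives the first convergence; and because the $\Db_r^{(k)}$ are independent copies of $\Db_r$ while each $S_r^{(k)}$ is the image of $\Db_r^{(k)}$ under this one fixed measurable functional, $S_r^{(1)},\dots,S_r^{(N)}$ are automatically independent copies of $S_r$.

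For the statistics $T_{r,n}$ the integrating measure is itself random, so I would instead work with the functional $\Psi$ sending a tuple $(f_0,\dots,f_N,G)$, where $f_k \in \ell^\infty([0,1]^d)$ and $G$ is a cumulative distribution function on $[0,1]^d$, to $\bigl(\int_{[0,1]^d} f_0^2\,\dd G,\dots,\int_{[0,1]^d} f_N^2\,\dd G\bigr)$. Writing $\int f_{k,n}^2\,\dd G_n - \int f_k^2\,\dd G = \int(f_{k,n}^2-f_k^2)\,\dd G_n + \int f_k^2\,\dd(G_n-G)$ and bounding the first term by $\norm{f_{k,n}^2-f_k^2}$ (as $G_n$ is a probability measure), one sees that $\Psi$ is continuous at every tuple whose components $f_k$ are all continuous off a $G$-null set: indeed $\norm{G_n-G}\to 0$ makes the measures induced by $G_n$ converge weakly to the one induced by $G$, so $\int f_k^2\,\dd G_n \to \int f_k^2\,\dd G$ by the portmanteau theorem. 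It then remains to check that the limit $(\Db_r,\dots,\Db_r^{(N)},C)$ lies in this continuity set almost surely, and here I would use Condition~($\CC$): since $C^{[j]}$ is continuous on $V_j$, the sample paths of $\Cb$ in~(\ref{ec}) are almost surely continuous on $\bigcap_j V_j = (0,1)^d$, and, $\vec u \mapsto \vec u^{1/r}$ mapping $(0,1)^d$ into itself, the discontinuities of each $\Db_r^{(k)}$ are confined to $\partial[0,1]^d$, a $C$-null set because $C$ has continuous margins. The continuous mapping theorem in the form allowing the map to be continuous only on a set carrying the limit — or, more concretely, the almost sure representation theorem followed by a pathwise use of the portmanteau theorem (see, e.g., \citealp{vanWel96}) — then delivers the second convergence, with $T_r^{(1)},\dots,T_r^{(N)}$ independent copies of $T_r$ exactly as in the $S$ case.

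The step I expect to be the main obstacle is the treatment of $T_{r,n}$: unlike $S_{r,n}$ it is not a fixed continuous image of $\Db_{r,n}$, so one has to control simultaneously the fluctuations of the integrand and of the random integrating distribution $C_n$. The decomposition above separates these two effects and reduces everything to (i) the elementary continuity of $f \mapsto \int_{[0,1]^d} f^2\,\dd G$ for a fixed probability measure $G$ and (ii) the transfer of uniform convergence of $C_n$ to weak convergence of the induced measures; point (ii) in turn forces the small but genuine verification that the sample paths of $\Db_r$ have a $C$-negligible discontinuity set, which is the only place at which Condition~($\CC$) and the absence of mass of $C$ on the faces of the unit cube are really used.
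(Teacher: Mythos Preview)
Your proof is correct and follows the same architecture as the paper: augment Proposition~\ref{multiplier} with $C_n$ via Slutsky, then apply the continuous mapping theorem to the integration functional (isolated in the paper as a separate Lemma~\ref{lem:cont}). The only difference is that you hedge against possible discontinuities of $\Db_r$ on $\partial[0,1]^d$ and invoke the portmanteau theorem for $C$-a.e.\ continuous integrands, whereas the paper observes that $\Db_r$ is in fact a.s.\ continuous on all of $[0,1]^d$ --- on $\{u_j\in\{0,1\}\}$ the factor $\alpha(1,\dots,u_j,\dots,1)$ in~\eqref{ec} vanishes, so the bounded but possibly discontinuous $C^{[j]}$ does no harm --- which lets Lemma~\ref{lem:cont} require full continuity and sidestep your boundary discussion entirely.
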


The previous results suggest computing approximate $p$-values for $S_{r,n}$ and $T_{r,n}$ as
$$
\frac{1}{N} \sum_{k=1}^N \1(S_{r,n}^{(k)} \geq S_{r,n}) \qquad\mbox{and}\qquad\frac{1}{N} \sum_{k=1}^N \1(T_{r,n}^{(k)} \geq T_{r,n}),
$$
respectively. 

Notice that, when $H_{0,r}$ is not true, the processes $\Db_{r,n}^{(k)}$, $k \in \{1,\dots,N\}$, cannot be regarded anymore as approximate independent copies of $\Db_{r,n}$ under $H_{0,r}$ because $\vec X_1,\dots,\vec X_n$ is not anymore a random sample from a c.d.f.\ $C \{ F_1(x_1),\dots,F_d(x_d) \}$, where $C$ satisfies $C(\vec u) = \{C(\vec u^{1/r})\}^r$ for all $\vec u \in [0,1]^d$. This however does not affect the consistency of the procedure with respect to the hypothesis $H_{0,r}$. Indeed, the process $\Db_{r,n}$ can be decomposed as 
\begin{multline*}
\Db_{r,n}(\vec u) = \sqrt{n} \left[ \{ C_n(\vec u^{1/r}) \}^r - \{C(\vec u^{1/r}) \}^r\right] \\ - \sqrt{n} \left\{ C_n(\vec u) - C(\vec u) \right\} + \sqrt{n} \left[ \{ C(\vec u^{1/r}) \}^r - C(\vec u) \right], \qquad \vec u \in [0,1]^d.
\end{multline*}
Whether $H_{0,r}$ is false or not, provided Condition ($\CC$) is satisfied, the first and second term will jointly converge weakly to the limit established in the proof of Proposition~\ref{limitH0} (see~(\ref{limit}) in Appendix~\ref{proofs}), while, if $H_{0,r}$ is false, 
$$
\sup_{\vec u \in [0,1]^d} \sqrt{n} \left| \{ C(\vec u^{1/r}) \}^r - C(\vec u) \right| \to \infty.
$$
On the other hand, from the proof of Proposition~\ref{multiplier}, it is easy to verify that, provided Condition ($\CC$) is satisfied, and whether $H_{0,r}$ is false or not,
$$
\left(\Db_{r,n}^{(1)}, \dots,\Db_{r,n}^{(N)} \right) \leadsto \left(\Db_r^{(1)},\dots,\Db_r^{(N)} \right)
$$ 
in $\{\ell^\infty([0,1]^d)\}^N$, where $\Db_r^{(1)},\dots,\Db_r^{(N)}$ are independent copies of the process $\Db_r$ defined in~(\ref{Db_r}). It follows that the statistics $S_{r,n}$ and $T_{r,n}$, as any sensible statistic derived from the process $\Db_{r,n}$, will be consistent with respect to the hypothesis $H_{0,r}$. 

To improve the sensitivity of the proposed tests, given $p$ reals $r_1,\dots,r_p \geq 1$, we also consider tests based on statistics of the form
\begin{equation}
\label{S_T}
S_{r_1,\dots,r_p,n} = \sum_{i=1}^p S_{r_i,n} \qquad \mbox{and} \qquad T_{r_1,\dots,r_p,n} = \sum_{i=1}^p T_{r_i,n}.
\end{equation}
\section{Implementation of the tests based on \protect{$S_{r,n}$} and \protect{$T_{r,n}$}}
\label{implementation}

We first discuss the implementation of the test based on $S_{r,n}$. The implementation of the test based on $T_{r,n}$ follows immediately after a simple modification.


Given a large integer $m > 0$, we proceed by numerical approximation based on a grid of $m$ uniformly spaced points on $(0,1)^d$ denoted $\vec w_1,\dots,\vec w_m$. Then, 
$$
S_{r,n} \approx \frac{1}{m} \sum_{j=1}^m \{ \Db_{r,n}(\vec w_j) \}^2 = \frac{n}{m} \sum_{j=1}^m \left[ \{ C_n(\vec w_j^{1/r})\}^r - C_n(\vec w_j) \right]^2,
$$
and, for any $k \in \{1,\dots,N\}$,
$$
S_{r,n}^{(k)} \approx \frac{1}{m} \sum_{j=1}^m \{ \Db_{r,n}^{(k)}(\vec w_j) \}^2 = \frac{1}{m} \sum_{j=1}^m \left[ r \{ C_n(\vec w_j^{1/r}) \}^{r-1}  \Cb_n^{(k)}(\vec w_j^{1/r}) - \Cb_n^{(k)}(\vec w_j) \right]^2.
$$
To efficiently implement the test, first notice that, for any $k \in \{1,\dots,N\}$, $\Cb_n^{(k)}$ can be conveniently written as
$$
\Cb_n^{(k)}(\vec u) = \frac{1}{\sqrt{n}} \sum_{i=1}^n (Z_i^{(k)} - \bar Z^{(k)}) \left\{ \1(\vec{\hat U}_i \leq \vec u) - \sum_{l=1}^d C^{[l]}_n(\vec u)  \1(\hat U_{il} \leq u_l) \right\}, \qquad \vec u \in [0,1]^d.
$$
It follows that the $\Db_{r,n}^{(k)}(\vec w_j)$ can be expressed as
$$
\Db_{r,n}^{(k)}(\vec w_j) = r \{ C_n(\vec w_j^{1/r}) \}^{r-1}  \Cb_n^{(k)}(\vec w_j^{1/r}) - \Cb_n^{(k)}(\vec w_j) = \frac{1}{\sqrt{n}} \sum_{i=1}^n (Z_i^{(k)} - \bar Z^{(k)}) M_n(i,j)
$$
in terms of a $n \times m$ matrix $M_n$ whose $(i,j)$-element is
\begin{multline*}
M_n(i,j) = r \{ C_n(\vec w_j^{1/r}) \}^{r-1} \left\{ \1(\vec{\hat U}_i \leq \vec w_j^{1/r}) - \sum_{l=1}^d C^{[l]}_n(\vec w_j^{1/r})  \1(\hat U_{il} \leq w_{jl}^{1/r}) \right\} \\ - \left\{ 1(\vec{\hat U}_i \leq \vec w_j) - \sum_{l=1}^d C^{[l]}_n(\vec w_j)  \1(\hat U_{il} \leq w_{jl}) \right\}.
\end{multline*}
In order to carry out the test based on $S_{r,n}$, it is first necessary to compute the $n \times m$ matrix $M_n$. Then, to compute $S_{r,n}^{(k)}$, it suffices to generate $n$ i.i.d.\ random variates $Z_1^{(k)},\dots,Z_n^{(k)}$ with expectation 0, variance~1, satisfying $\int_0^\infty \{ \Pr(|Z_i^{(k)}| > x) \}^{1/2} \dd x < \infty$, and perform simple arithmetic operations involving the centered $Z_i^{(k)}$ and the columns of matrix $M_n$. In the Monte Carlo simulations to be presented in the next section, the $Z_i^{(k)}$ are taken from the standard normal distribution.

For the test based on $T_{r,n}$, clearly,
$$
T_{r,n} = \frac{1}{n} \sum_{j=1}^n  \{\Db_{r,n}(\vec{\hat U}_j)\}^2 
\qquad
\mbox{and}
\qquad
T_{r,n}^{(k)} = \frac{1}{n} \sum_{j=1}^n  \{\Db_{r,n}^{(k)}(\vec{\hat U}_j)\}^2, \qquad k \in \{1,\dots,N\}.
$$
Expressions for implementing the test then immediately follow from those given for $S_{r,n}$ and $S_{r,n}^{(k)}$: simply replace $m$ by $n$, and $\vec w_j$ by $\vec{\hat U}_j$.

\section{Finite-sample performance}
\label{finite_sample}

Extensive Monte Carlo experiments were conducted to investigate the level and power of the tests based on $S_{r,n}$ and $T_{r,n}$ for samples of size $n=100$, 200, 400 and 800. The values $2,3,\dots,9$ were considered for $r$. We also investigated the finite-sample performance of the tests based on the statistics $S_{r_1,\dots,r_p,n}$ and $T_{r_1,\dots,r_p,n}$ defined in~(\ref{S_T}). Approximate $p$-values for the latter tests can be obtained by proceeding as in the previous section. Several configurations were studied among which $(r_1,\dots,r_p) = (2,3,\dots,9)$ and $(3,4,5)$. Data sets of dimension two to five were generated, both from extreme-value and non extreme-value copulas. Given that the most frequently used bivariate exchangeable extreme-value copulas such as the Gumbel--Hougaard, Galambos, H\"usler--Reiss and Student extreme-value copulas show striking similarities for a given degree of dependence \citep[see][for a detailed discussion of this matter]{GenKojNesYan11}, only the Gumbel--Hougaard (GH) and its asymmetric version (aGH) defined using  Khoudraji's device \citep{Kho95,GenGhoRiv98,Lie08} were used in the simulations. Given an exchangeable copula $C_\theta$, Khoudraji's device defines an asymmetric version of it as
$$
C_{\theta, \boldsymbol \lambda}(\vec u) = u_1^{1-\lambda_1} \dots u_d^{1-\lambda_d} C_\theta(u_1^{\lambda_1},\dots,u_d^{\lambda_d}),
$$
for all $\vec u \in [0,1]^d$ and an arbitrary choice of $\boldsymbol \lambda = (\lambda_1,\dots,\lambda_d) \in (0,1)^d$ such that $\lambda_i \neq \lambda_j$ for some $\{i,j\} \subset \{1,\dots,d\}$. If $C_\theta$ is an extreme-value copula, then the same is true of $C_{\theta, \boldsymbol \lambda}$. Note that the asymmetric Gumbel--Hougaard (aGH) obtained from Khoudraji's device is nothing else but the asymmetric logistic model introduced in \cite{Taw88,Taw90}. In the experiments, the parameter $\theta$ of the asymmetric Gumbel--Hougaard was set to 4. In dimension two, the shape parameter vector $\boldsymbol \lambda$ was taken equal to $(\lambda_1,0.95)$, with $\lambda_1 \in \{0.2,0.4,0.6,0.8\}$, so that data generated from the corresponding copulas display various degrees of asymmetry. The corresponding values for Kendall's tau are about 0.19, 0.34, 0.48 and 0.60, respectively. In dimension three, four and five, $\boldsymbol \lambda$ was set to $(0.2,0.4,0.95)$, $(0.2,0.4,0.6,0.95)$, and $(0.2,0.4,0.6,0.8,0.95)$, respectively. 

As far as non extreme-value copulas are concerned, the Clayton (C), Frank (F), normal (N), $t$ with four degrees of freedom (t), and Plackett (P) (for dimension two only) copulas were used in the experiments. 

For each of the one-parameter exchangeable families considered in the study (GH, C, F, N, t, P), three values of the parameter were considered. These were chosen so that the bivariate margins of the copulas have a Kendall's tau of 0.25, 0.50, and 0.75, respectively.

All the tests were carried out at the 5\% significance level and empirical rejection rates were computed from 1000 random samples per scenario. For the tests based on $S_{r,n}$, the parameter $m$ defined in Section~\ref{implementation} was set to $44^2$ in dimension two, $13^3$ in dimension three, $7^4$ in dimension four, and $5^5$ in dimension five. Smaller and greater values of $m$ were also considered but this did not seem to affect the results much.

\begin{table}[tbp]\small
\begin{center}
\caption{Rejection rate (in \%) of the null hypothesis in the bivariate case as observed in 1000 random samples of size $n=100$, 200, 400 and 800 from the Gumbel--Hougaard copula (GH) and its asymmetric version (aGH) with $\theta =4$ and $\boldsymbol \lambda = (\lambda_1,0.95)$.}
\label{tab:d2level}
\begin{tabular}{ccc rrrrr r rrrrr}
\toprule
Copula & $\tau$ & $\lambda_1$ & $T_{3,n}$ & $T_{4,n}$ & $T_{5,n}$ & $T_{3,4,5,n}$ & $\hat\sigma^2_n$ & \phantom{xxx}  & $T_{3,n}$ & $T_{4,n}$ & $T_{5,n}$ & $T_{3,4,5,n}$ & $\hat\sigma^2_n$ \\
\midrule
& & & \multicolumn{5}{c}{$ n = 100$} & & \multicolumn{5}{c}{$n = 200$}\\
\cmidrule(lr){4-8}\cmidrule(lr){10-14}
GH & 0.25 &  & 5.0 & 5.1 & 5.8 & 5.4 & 5.3 &  & 4.2 & 5.1 & 5.4 & 5.0 & 5.3 \\ 
   & 0.50 &  & 3.9 & 4.3 & 4.6 & 4.3 & 4.9 &  & 3.6 & 4.1 & 4.3 & 4.0 & 5.1 \\ 
   & 0.75 &  & 3.2 & 3.2 & 3.8 & 3.5 & 5.2 &  & 2.1 & 2.5 & 2.5 & 2.3 & 4.9 \\ 
  aGH &  & 0.2 & 5.3 & 5.5 & 6.2 & 5.8 & 5.4 &  & 5.9 & 5.9 & 6.7 & 6.1 & 5.8 \\ 
   &  & 0.4 & 4.6 & 5.5 & 5.9 & 5.5 & 5.6 &  & 4.3 & 5.1 & 5.6 & 5.2 & 5.4 \\ 
   &  & 0.6 & 4.2 & 4.8 & 5.2 & 4.8 & 4.9 &  & 4.4 & 5.1 & 5.4 & 5.0 & 5.5 \\ 
   &  & 0.8 & 4.7 & 5.0 & 5.1 & 5.0 & 5.0 &  & 4.7 & 5.2 & 5.6 & 5.3 & 4.9 \\ 
[2ex]\\
& & &\multicolumn{5}{c}{$ n = 400$} & &\multicolumn{5}{c}{$n = 800$}\\
\cmidrule(lr){4-8}\cmidrule(lr){10-14}
  GH & 0.25 &  & 4.3 & 4.9 & 5.1 & 4.8 & 5.2 &  & 4.6 & 5.1 & 5.1 & 5.1 & 5.0 \\ 
   & 0.50 &  & 3.4 & 3.8 & 4.2 & 3.9 & 5.2 &  & 4.0 & 4.3 & 4.6 & 4.4 & 5.0 \\ 
   & 0.75 &  & 2.4 & 2.5 & 2.6 & 2.5 & 5.3 &  & 3.4 & 3.5 & 3.6 & 3.6 & 5.0 \\ 
  aGH &  & 0.2 & 5.4 & 5.4 & 6.0 & 5.8 & 5.9 &  & 4.9 & 5.1 & 5.4 & 5.3 & 5.4 \\ 
   &  & 0.4 & 4.3 & 5.1 & 5.5 & 5.1 & 5.1 &  & 3.3 & 4.3 & 4.5 & 4.2 & 5.4 \\ 
   &  & 0.6 & 4.6 & 4.9 & 5.0 & 4.9 & 5.1 &  & 3.6 & 4.0 & 4.7 & 4.1 & 5.3 \\ 
   &  & 0.8 & 4.5 & 4.7 & 5.2 & 4.8 & 5.0 &  & 4.1 & 4.4 & 4.6 & 4.5 & 4.9 \\  
\bottomrule
\end{tabular}
\end{center}
\end{table}

\begin{table}[tbp]\small
\addtolength{\tabcolsep}{-1pt}
\begin{center}
\caption{Rejection rate (in \%) of the null hypothesis in the bivariate case as observed in 1000 random samples of size $n=100$, 200, 400 and 800 from the Clayton (C), Frank (F), normal (N), $t$ with four degrees of freedom (t), and Plackett copula (P).}
\label{tab:d2}
\begin{tabular}{cr rrrrr c rrrrr}
\toprule
Copula & $\tau$ & $T_{3,n}$ & $T_{4,n}$ & $T_{5,n}$ & $T_{3,4,5,n}$ & $\hat \sigma_n^2$ & \phantom{xx} & $T_{3,n}$ & $T_{4,n}$ & $T_{5,n}$ & $T_{3,4,5,n}$ & $\hat \sigma_n^2$ \\
\midrule
& & \multicolumn{5}{c}{$ n = 100$} & & \multicolumn{5}{c}{$n = 200$}\\
\cmidrule(lr){3-7}\cmidrule(lr){9-13}
C & 0.25 & 74.4 & 72.2 & 72.5 & 73.8 & 79.3 &  & 93.3 & 94.2 & 94.0 & 94.6 & 97.9 \\ 
   & 0.50 & 99.1 & 98.3 & 98.2 & 98.5 & 99.5 &  & 100.0 & 100.0 & 100.0 & 100.0 & 100.0 \\ 
   & 0.75 & 99.3 & 99.7 & 99.8 & 99.9 & 100.0 &  & 100.0 & 100.0 & 100.0 & 100.0 & 100.0 \\ 
  F & 0.25 & 38.9 & 43.7 & 46.4 & 45.0 & 22.4 &  & 56.1 & 66.2 & 69.8 & 66.1 & 38.2 \\ 
   & 0.50 & 62.2 & 68.8 & 75.8 & 71.3 & 34.7 &  & 88.8 & 95.2 & 96.5 & 96.0 & 60.3 \\ 
   & 0.75 & 75.0 & 85.0 & 89.2 & 86.9 & 34.7 &  & 96.7 & 98.9 & 99.4 & 99.0 & 57.3 \\ 
  N & 0.25 & 26.9 & 25.5 & 26.2 & 26.8 & 22.3 &  & 32.5 & 38.4 & 39.5 & 38.7 & 36.2 \\ 
   & 0.50 & 27.5 & 28.8 & 30.8 & 30.8 & 36.9 &  & 44.6 & 50.2 & 52.8 & 51.0 & 63.0 \\ 
   & 0.75 & 22.0 & 24.5 & 26.9 & 26.1 & 43.9 &  & 33.9 & 46.6 & 50.7 & 46.7 & 75.9 \\ 
  P & 0.25 & 35.2 & 37.6 & 42.6 & 39.3 & 21.5 &  & 50.0 & 59.0 & 63.3 & 59.2 & 38.1 \\ 
   & 0.50 & 47.9 & 54.5 & 59.2 & 56.3 & 30.5 &  & 71.1 & 81.0 & 84.8 & 81.7 & 61.7 \\ 
   & 0.75 & 42.5 & 50.1 & 56.0 & 53.1 & 34.6 &  & 60.9 & 76.4 & 83.6 & 78.4 & 58.5 \\ 
  t & 0.25 & 15.2 & 14.0 & 14.4 & 14.8 & 15.0 &  & 17.6 & 18.7 & 18.1 & 18.4 & 26.6 \\ 
   & 0.50 & 22.8 & 22.9 & 23.3 & 23.9 & 29.4 &  & 31.1 & 33.0 & 31.4 & 33.4 & 52.7 \\ 
   & 0.75 & 19.2 & 19.2 & 20.6 & 20.7 & 39.9 &  & 26.2 & 33.5 & 33.4 & 34.2 & 69.2 \\ 
[2ex]\\
& & \multicolumn{5}{c}{$ n = 400$} & & \multicolumn{5}{c}{$n = 800$}\\
\cmidrule(lr){3-7}\cmidrule(lr){9-13}
  C & 0.25 & 99.8 & 99.8 & 100.0 & 100.0 & 100.0 &  & 100.0 & 100.0 & 100.0 & 100.0 & 100.0 \\ 
   & 0.50 & 100.0 & 100.0 & 100.0 & 100.0 & 100.0 &  & 100.0 & 100.0 & 100.0 & 100.0 & 100.0 \\ 
   & 0.75 & 100.0 & 100.0 & 100.0 & 100.0 & 100.0 &  & 100.0 & 100.0 & 100.0 & 100.0 & 100.0 \\ 
  F & 0.25 & 85.7 & 91.5 & 94.1 & 92.5 & 64.6 &  & 99.3 & 99.8 & 99.9 & 99.8 & 92.6 \\ 
   & 0.50 & 99.6 & 100.0 & 100.0 & 100.0 & 87.6 &  & 100.0 & 100.0 & 100.0 & 100.0 & 99.1 \\ 
   & 0.75 & 100.0 & 100.0 & 100.0 & 100.0 & 85.9 &  & 100.0 & 100.0 & 100.0 & 100.0 & 99.4 \\ 
  N & 0.25 & 55.6 & 60.7 & 63.7 & 61.4 & 61.7 &  & 85.9 & 89.5 & 90.7 & 90.2 & 90.0 \\ 
   & 0.50 & 72.1 & 77.9 & 80.0 & 79.0 & 90.9 &  & 96.7 & 98.2 & 98.9 & 98.3 & 99.5 \\ 
   & 0.75 & 55.8 & 70.9 & 77.3 & 72.8 & 96.3 &  & 94.2 & 97.6 & 99.0 & 98.5 & 100.0 \\ 
  P & 0.25 & 75.8 & 84.0 & 87.5 & 85.2 & 61.4 &  & 96.8 & 99.1 & 99.5 & 99.2 & 89.5 \\ 
   & 0.50 & 95.1 & 97.0 & 98.8 & 97.7 & 87.4 &  & 100.0 & 100.0 & 100.0 & 100.0 & 99.1 \\ 
   & 0.75 & 92.4 & 97.2 & 98.9 & 97.9 & 87.9 &  & 99.9 & 100.0 & 100.0 & 100.0 & 99.4 \\ 
  t & 0.25 & 28.2 & 27.5 & 25.9 & 28.0 & 44.4 &  & 52.1 & 45.6 & 44.4 & 47.6 & 74.1 \\ 
   & 0.50 & 55.3 & 57.1 & 56.2 & 58.4 & 84.2 &  & 85.6 & 86.7 & 85.5 & 87.8 & 99.0 \\ 
   & 0.75 & 49.7 & 56.3 & 56.8 & 56.8 & 93.7 &  & 87.6 & 89.6 & 90.8 & 90.9 & 100.0 \\
\bottomrule
\end{tabular}
\end{center}
\end{table}

In most scenarios involving extreme-value copulas, the tests turned out to be globally too conservative, although the agreement with the 5\% level seemed to improve as $n$ was increased. To attempt to improve the empirical levels of the tests for $n \in \{100,200\}$, we considered several asymptotically negligible ways of rescaling the empirical copula in the expression of the test process~(\ref{test_process}), while keeping the expressions of the processes $\Db_{r,n}^{(k)}$, $k \in \{1,\dots,N\}$, unchanged. Reasonably good empirical levels were obtained by replacing $C_n$ in the expression of $\Db_{r,n}$ by $n (n+0.85)^{-1} C_n$. With this asymptotically negligible modification, the best results were obtained for $r \in \{3,4,5\}$ and for the tests based on $T_{r,n}$, which consistently outperformed the tests based on $S_{r,n}$. In dimension two, the rejection rates of the tests based on $T_{3,n}$, $T_{4,n}$, $T_{5,n}$, and $T_{3,4,5,n}$ are reported in Tables~\ref{tab:d2level} and~\ref{tab:d2}. 

As can be seen from Table~\ref{tab:d2level}, the empirical levels of the selected tests are, overall, reasonably close to the 5\% nominal level for $\tau \in \{0.25,0.5\}$ and $\lambda_1 \in \{0.2,0.4,0.6,0.8\}$, which, as discussed earlier, corresponds to weak to moderate dependence. The tests remain however too conservative when $\tau=0.75$, although the empirical levels seems globally to improve as $n$ increases. An inspection of Table~\ref{tab:d2} shows that, in terms of power, the tests based on $T_{4,n}$ and $T_{5,n}$ appear globally more powerful than that based on $T_{3,n}$, although the latter sometimes outperforms the former in the case of weakly dependent data sets. As far as the test based on $T_{3,4,5,n}$ is concerned, its rejection rates are almost always greater than those of $T_{4,n}$, and sometimes greater than those of $T_{5,n}$. 

The previous tests can be compared with the test of extreme-value dependence proposed by \cite{GhoKhoRiv98} and improved by \cite{BenGenNes09}. The rejection rates of the best version of that test, based on a variance estimator denoted $\hat \sigma_n^2$, were computed using routines available in the {\tt copula} R package, and are reported in Table~\ref{tab:d2}. The test based on $\hat \sigma_n^2$ is more powerful than its competitors when data are generated from an elliptical copula, the gain in power being particularly large for the $t$ copula. The proposed tests perform better when data are generated from a Frank or a Plackett copula. For $n =100$ and the Frank copula, the rejection rates of test based on $T_{3,4,5,n}$ are approximately twice as great as those of the test based on $\hat \sigma_n^2$. From the lower right block of Table~\ref{tab:d2}, we also see that, for all tests, the optimal rejection rate is almost reached in all scenarios not involving extreme-value copulas when $n=800$.

The rejection rates of the test based on $T_{3,4,5,n}$ for data sets of dimension three, four and five are given in Table~\ref{tab:d3d4d5}. As can be seen from the first two horizontal blocks of the table, in the case of weak to moderate dependence, the test appears slightly conservative, overall, although the agreement with the 5\% level seems to improve as $n$ increases. As in dimension two, the test is the most conservative in the case of strongly dependent data and this phenomenon increases with the dimension. Notice however that, in almost all scenarios under the alternative hypothesis, the power of the test increases as $d$ increases. This might be due to the fact that every bivariate margin of a $d$-variate extreme-value copula must be max-stable. Hence, deviations from multivariate max-stability might be easier to detect as the dimension increases. Note finally that, as $n$ reaches 800, the optimal rejection rate is almost attained in all scenarios not involving extreme-value copulas.

\begin{sidewaystable}[tbp]\small
\addtolength{\tabcolsep}{-1pt}
\begin{center}
\caption{Rejection rate (in \%) of the null hypothesis for the test based on $T_{3,4,5,n}$ for $d=3$, 4 and 5 as observed in 1000 random samples of size $n=100$, 200, 400 and 800 from the Gumbel--Hougaard (GH), its asymmetric version (aGH), the Clayton (C), Frank (F), normal (N), and the $t$ copula with four degrees of freedom (t). The parameters of aGH are $\theta=4$ and $\boldsymbol \lambda = (0.2,0.4,0.95)$ in dimension three, $\boldsymbol \lambda = (0.2,0.4,0.6,0.95)$ in dimension four, and $\boldsymbol \lambda = (0.2,0.4,0.6,0.8,0.95)$ in dimension five.}
\label{tab:d3d4d5}
\begin{tabular}{cr rrrr c rrrr c rrrr}
\toprule
True & $\tau$ & \multicolumn{4}{c}{$d = 3$} & \phantom{x} &  \multicolumn{4}{c}{$d = 4$} & \phantom{x} &  \multicolumn{4}{c}{$d = 5$}\\
\cmidrule(lr){3-6}\cmidrule(lr){8-11}\cmidrule(lr){13-16}
& & 100 & 200 & 400 & 800 & &  100 & 200 & 400 & 800 & &  100 & 200 & 400 & 800\\
\midrule
GH & 0.25 & 5.0 & 4.9 & 4.8 & 4.6 &  & 4.8 & 5.0 & 4.3 & 4.9 &  & 4.2 & 4.5 & 4.6 & 4.8 \\ 
   & 0.50 & 2.8 & 3.0 & 3.4 & 4.0 &  & 2.3 & 2.8 & 3.2 & 3.6 &  & 2.2 & 2.2 & 2.7 & 3.6 \\ 
   & 0.75 & 0.9 & 1.1 & 1.6 & 2.4 &  & 0.4 & 0.5 & 0.9 & 1.8 &  & 0.2 & 0.3 & 0.6 & 1.0 \\
[1ex]\\ 
aGH &  & 5.5 & 4.4 & 4.4 & 4.8 &  & 4.2 & 3.6 & 3.9 & 4.0 &  & 3.5 & 3.4 & 3.0 & 3.4 \\ 
[1ex]\\ 
  C & 0.25 & 91.9 & 99.9 & 100.0 & 100.0 &  & 98.2 & 100.0 & 100.0 & 100.0 &  & 98.8 & 100.0 & 100.0 & 100.0 \\ 
   & 0.50 & 100.0 & 100.0 & 100.0 & 100.0 &  & 100.0 & 100.0 & 100.0 & 100.0 &  & 100.0 & 100.0 & 100.0 & 100.0 \\ 
   & 0.75 & 100.0 & 100.0 & 100.0 & 100.0 &  & 100.0 & 100.0 & 100.0 & 100.0 &  & 100.0 & 100.0 & 100.0 & 100.0 \\ 
[1ex]\\ 
  F & 0.25 & 59.0 & 86.9 & 99.4 & 100.0 &  & 62.2 & 94.6 & 99.9 & 100.0 &  & 65.6 & 96.0 & 100.0 & 100.0 \\ 
   & 0.50 & 83.3 & 98.9 & 100.0 & 100.0 &  & 88.0 & 99.8 & 100.0 & 100.0 &  & 90.1 & 99.9 & 100.0 & 100.0 \\ 
   & 0.75 & 91.0 & 100.0 & 100.0 & 100.0 &  & 89.8 & 100.0 & 100.0 & 100.0 &  & 89.3 & 99.8 & 100.0 & 100.0 \\ 
[1ex]\\ 
N & 0.25 & 35.2 & 65.2 & 91.3 & 99.1 &  & 46.9 & 76.9 & 97.5 & 100.0 &  & 52.1 & 86.3 & 99.1 & 100.0 \\ 
   & 0.50 & 39.6 & 69.9 & 94.7 & 99.9 &  & 41.1 & 76.8 & 97.4 & 100.0 &  & 45.8 & 80.7 & 98.4 & 100.0 \\ 
   & 0.75 & 20.7 & 44.8 & 85.0 & 99.6 &  & 12.9 & 42.6 & 85.7 & 99.9 &  & 10.6 & 32.7 & 84.5 & 99.9 \\ 
[1ex]\\ 
  t & 0.25 & 16.9 & 23.6 & 42.1 & 67.0 &  & 16.7 & 28.0 & 50.7 & 79.9 &  & 20.3 & 33.2 & 58.7 & 83.9 \\ 
   & 0.50 & 23.8 & 44.8 & 74.3 & 96.9 &  & 26.3 & 48.7 & 83.2 & 99.0 &  & 26.8 & 53.8 & 84.0 & 99.8 \\ 
   & 0.75 & 12.6 & 26.3 & 65.6 & 97.4 &  & 7.1 & 19.8 & 60.4 & 98.1 &  & 3.9 & 15.3 & 55.7 & 96.3 \\ 
\bottomrule
\end{tabular}
\end{center}
\end{sidewaystable}

\section{Discussion and illustrations}
\label{illustrations}

The results of the extensive Monte Carlo experiments partially reported in the previous section suggest that the test based on the statistic $T_{3,4,5,n}$ can be safely used in dimension two or greater to assess whether data arise from an extreme-value copula. The choice of the statistic $T_{3,4,5,n}$ is not claimed to be optimal as other candidate test statistics could be considered. In dimension two, the test appears more powerful than the test of \cite{BenGenNes09} based on $\hat \sigma_n^2$ in approximately half of the scenarios under the alternative hypothesis, and is outperformed in the remaining scenarios. In dimension strictly greater than two, the proposed approach is presently, to the best of our knowledge, the only available procedure for testing extreme-value dependence.  

As an illustration, we first applied the test based on $T_{3,4,5,n}$ to the bivariate indemnity payment and allocated loss adjustment expense data studied in \cite{FreVal98}. These consist of 1466 general liability claims randomly chosen from late settlement lags (among the initial 1500 claims, 34 claims for which the policy limit was reached were ignored). Many studies, including that of \cite{BenGenNes09}, have concluded that an extreme-value copula is likely to provide an adequate model of the dependence. 

Note that these data contain a non negligible number of ties. As is the case for other procedures based on the empirical copula, the presence of ties might significantly affect the tests under study since these were derived under the assumption of continuous margins. To deal somehow satisfactorily with ties, \cite{KojYan10} suggested to assign ranks at random in the case of ties when computing pseudo-observations. This was done using the R function \texttt{rank} with its argument \texttt{ties.method} set to \texttt{"random"}. The test was then carried out on the resulting pseudo-observations. With the hope that the use of randomization will result in many different configurations for the parts of the data affected by ties, the test based on the pseudo-observations computed with \texttt{ties.method = "random"} was performed 100 times with $N=1000$. The minimum, median and maximum of the obtained approximate $p$-values are 40.7\%, 45.9\%, and 50.4\%, respectively. If the pseudo-observations are computed using mid-ranks, the approximate $p$-value, based on $N=10 \, 000$ multiplier iterations, drops down to 1.7\%. As already observed in other situations, using mid-ranks seems to increase the evidence against the null hypothesis.
 
\begin{table}[tbp]
\addtolength{\tabcolsep}{-2pt}
\begin{center}
\caption{Approximate $p$-values (in \%) for the test based on $T_{3,4,5,n}$ obtained for the triples of variables \{U,Co,Li\}, \{U,Li,Ti\} and \{Ti,Li,Cs\} of the uranium data of \citet{CooJoh86}.}
\label{tab:u3}
\begin{tabular}{l rrr c r}
\toprule
 & \multicolumn{3}{c}{Random ranks for ties}& \phantom{xx} &  \\
 & Minimum & Median & Maximum & \phantom{xx} & Mid-ranks \\
\midrule
\{U, Co, Li\} & 0.0 & 0.0 & 0.1 & &0.0 \\ 
  \{U, Li, Ti\} & 0.0 & 0.1 & 0.3 & & 0.0 \\ 
  \{Ti, Li, Cs\} & 2.5 & 3.9 & 5.5 & &1.8 \\ 
\bottomrule
\end{tabular}
\end{center}
\end{table}

As a second example, we considered the uranium exploration data of \citet{CooJoh86}. The data consist of log-concentrations of seven chemical elements in 655 water samples collected near Grand Junction, Colorado: uranium (U), lithium (Li), cobalt (Co), potassium (K), cesium (Cs), scandium (Sc), and titanium (Ti). \citet{BenGenNes09} performed an extensive study of the 21 pairs of variables and suggested that the triples \{U,Co,Li\}, \{U,Li,Ti\} and \{Ti,Li,Cs\} should be investigated for trivariate extreme-value dependence once a multivariate test becomes available. Note that the number of ties in these data is greater than in the insurance data of \cite{FreVal98}. In particular, the variable Li takes only 90 different values out of 655. For that reason, as previously, we broke the ties at random and repeated the calculations 100 times with $N=1000$. Approximate $p$-values for the test based on $T_{3,4,5,n}$ are summarized in Table~\ref{tab:u3}. The first three columns give the minimum, median and maximum of the obtained $p$-values. The last column gives the $p$-values computed from the mid-ranks using $N=10\,000$. As for the insurance data, we see that the use of mid-ranks increases the evidence against the null hypothesis. Based on the randomization approach, we conclude that there is strong evidence against trivariate extreme-value dependence in the triples \{U,Co,Li\} and \{U,Li,Ti\}, while there is only marginal evidence against trivariate extreme-value dependence in the triple \{Ti,Li,Cs\}.

\subsection*{Acknowledgments}

The authors are very grateful to the associate editor and the referees for their constructive and insightful suggestions which helped to clean up a number of errors. The authors also thank Johanna Ne\v slehov\'{a} for providing R routines implementing the test based on $\hat \sigma_n^2$, and Mark Holmes for very fruitful discussions, as always.

\appendix

\section{Proofs of Propositions~1 and~2}
\label{proofs}

\begin{proof}[\bf Proof of Proposition~\ref{limitH0}]
From the limiting behavior of the empirical copula process given in Section~\ref{empirical_copula_process} and the functional version of Slutsky's theorem \citep[see e.g.,][Chap. 3.9]{vanWel96}, we have that
\begin{equation}
\label{limit}
\left(
\begin{array}{c}
\sqrt{n} \left [ \{C_n(\vec u^{1/r})\}^r - \{C(\vec u^{1/r})\}^r \right] \\
\sqrt{n} \left\{ C_n(\vec u) - C(\vec u) \right\} 
\end{array}
\right)
\leadsto 
\left(
\begin{array}{c}
r \{ C(\vec u^{1/r}) \}^{r-1}  \Cb(\vec u^{1/r}) \\
\Cb(\vec u) 
\end{array}
\right)
\end{equation}
in $\{\ell^\infty([0,1]^d)\}^2$. The desired result then follows from the continuous mapping theorem \citep[see e.g.][Theorem 1.3.6]{vanWel96}.
\end{proof}

\begin{proof}[\bf Proof of Proposition~\ref{multiplier}]
Let $j \in \{1,\dots,d\}$, and notice that
$$
C^{[j]}_n(\vec u) = \frac{1}{n(u_{j,n}^+ - u_{j,n}^-)}  \sum_{i=1}^n \left\{ \1(u_{j,n}^- < \hat U_{ij} \leq  u_{j,n}^+) \prod_{k=1 \atop k\neq j}^d \1(\hat U_{ik} \leq u_k) \right\}, \qquad \vec u \in [0,1]^d.
$$
Also, $1/\sqrt{n} \leq u_{j,n}^+ - u_{j,n}^- \leq 2/\sqrt{n}$ for all $u_j \in [0,1]$ and all $n \geq 1$. Hence,
$$
C^{[j]}_n(\vec u) \leq \frac{1}{\sqrt{n}}  \sum_{i=1}^n \1 \left\{ (n+1)u_{j,n}^- < R_{ij} \leq  (n+1)u_{j,n}^+ \right\}, \qquad \vec u \in [0,1]^d,
$$
where $R_{ij}$ is the rank of $X_{ij}$ among $X_{1j},\dots,X_{nj}$. It follows that
$$
\sup_{\vec u \in [0,1]} C^{[j]}_n(\vec u) \leq \sup_{\vec u \in [0,1]} \frac{(n+1)u_{j,n}^+ - (n+1)u_{j,n}^- + 1}{\sqrt{n}} \leq \frac{2(n+1)}{n} + \frac{1}{\sqrt{n}}.
$$
Thus, $C^{[j]}_n(\vec u) \leq 5$ for all $j \in \{1,\dots,d\}$, all $\vec u \in [0,1]^d$ and all $n \geq 1$. The latter fact combined with Lemma~\ref{uniform_convergence} given in Appendix~\ref{estimators_partial_deriv} stating a uniform convergence in probability of $C^{[j]}_n$ to $C^{[j]}$ enables us to use Proposition~3.2 of \cite{Seg11} from which it follows that
$$
\left(\Cb_n,\Cb_n^{(1)},\dots,\Cb_n^{(N)} \right) \leadsto \left( \Cb,\Cb^{(1)},\dots,\Cb^{(N)} \right)
$$ 
in $\{\ell^\infty([0,1]^d)\}^{(N+1)}$, where $\Cb^{(1)},\dots,\Cb^{(N)}$ are independent copies of $\Cb$. The desired result is then a consequence of the continuous mapping theorem and the fact that $C_n$ converges uniformly in probability to $C$.
\end{proof}

\section{Proof of Proposition~3}
\label{proof_Trn}

In order to prove the joint weak convergence of $T_{r,n},T_{r,n}^{(1)},\dots,T_{r,n}^{(N)}$, we first show a lemma.

Let $A$ be the space of bounded, Borel measurable functions on $[0, 1]^d$ and let $B$ be the space of c.d.f.s of finite Borel measures on $[0, 1]^d$. Define $\phi : A \times B \to \Rset$ by $\phi(a, b) = \int a \, \dd b$ and denote $\norm{f} = \sup_{u \in [0, 1]^d} |f(u)|$ for $f : [0, 1]^d \to \Rset$. 
The topologies on $A$ and $B$ are the ones induced by uniform convergence. The topology on $A \times B$,  $A^{N+1}$ or $A^{N+1} \times B$ is the product topology.

\begin{lem}
\label{lem:cont}
The map $\phi$ is continuous at each pair $(a_0, b_0)$ of $A \times B$ such that the functions $a_0$ and $b_0$ are continuous on $[0, 1]^d$.
\end{lem}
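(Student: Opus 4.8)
The plan is to estimate the difference $|\phi(a,b) - \phi(a_0,b_0)|$ for $(a,b)$ near $(a_0,b_0)$ in the product uniform topology, splitting it in the usual way as
$$
|\phi(a,b) - \phi(a_0,b_0)| \le \left| \int (a - a_0)\, \dd b \right| + \left| \int a_0\, \dd b - \int a_0\, \dd b_0 \right|.
$$
The first term is bounded by $\norm{a - a_0}\, b([0,1]^d)$, and since $b([0,1]^d) \to b_0([0,1]^d)$ when $\norm{b - b_0} \to 0$ (the total mass is read off from the value of the c.d.f.\ at the corner $(1,\dots,1)$), this term is controlled as soon as $\norm{a-a_0}$ is small; crucially this step does not use continuity of $a_0$ or $b_0$ at all. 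So the whole difficulty is concentrated in the second term, namely showing that $b \mapsto \int a_0\, \dd b$ is continuous at $b_0$ when $a_0$ is continuous on the compact set $[0,1]^d$.

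For that second term I would argue as follows. Since $a_0$ is continuous on the compact cube, it is uniformly continuous and bounded, so it can be uniformly approximated by a function $s$ that is constant on each cell of a fine finite grid partition of $[0,1]^d$ (a finite linear combination of indicators of half-open boxes). For such a step function $s$, $\int s\, \dd b$ is a finite linear combination of increments of the c.d.f.\ $b$ over boxes, and each such box-increment is a fixed finite linear combination of values of $b$ at the vertices of the box; hence $\int s\, \dd b \to \int s\, \dd b_0$ whenever $\norm{b - b_0}\to 0$. Then one writes
$$
\left| \int a_0\, \dd b - \int a_0\, \dd b_0 \right| \le \left| \int (a_0 - s)\, \dd b \right| + \left| \int s\, \dd b - \int s\, \dd b_0 \right| + \left| \int (s - a_0)\, \dd b_0 \right|,
$$
bounds the first and third terms by $\norm{a_0 - s}$ times the (bounded) total masses of $b$ and $b_0$, and lets the middle term go to $0$. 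Given $\varepsilon>0$, first choose the grid fine enough that $\norm{a_0 - s} < \varepsilon$, then choose the neighbourhood of $b_0$ small enough; this yields continuity of $\phi$ at $(a_0,b_0)$.

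The main obstacle — really the only substantive point — is the approximation of the continuous integrand $a_0$ by grid-step functions together with the observation that integration of a step function against $b$ depends on $b$ only through finitely many of its values, and therefore is continuous in the sup-norm on $B$; everything else is routine bookkeeping with the triangle inequality and the uniform bounds $\norm{a}\le \norm{a_0} + 1$ and $b([0,1]^d) \le b_0([0,1]^d) + 1$ valid on a small enough neighbourhood. Note that the restriction to pairs with \emph{both} $a_0$ and $b_0$ continuous is comfortably sufficient here: in the application $a_0$ will be (a square of) the Gaussian limit process, which has continuous paths by assumption, and $b_0 = C$ (or Lebesgue measure), which is continuous under Condition~($\CC$); this is exactly the setting in which Lemma~\ref{lem:cont} feeds into the continuous-mapping argument for Proposition~\ref{T_rn}.
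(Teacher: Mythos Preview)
Your proof is correct and uses the same triangle-inequality decomposition as the paper, and the first term is handled identically. The only difference is in the second term $\bigl|\int a_0\,\dd b - \int a_0\,\dd b_0\bigr|$: the paper simply notes that uniform convergence $\norm{b_n-b_0}\to 0$ (with $b_0$ continuous) implies weak convergence $\beta_n\to\beta_0$ of the associated finite Borel measures, and then invokes continuity of $a_0$ to conclude $\int a_0\,\dd\beta_n\to\int a_0\,\dd\beta_0$ in one line. Your step-function approximation is essentially a self-contained, elementary proof of that same weak-convergence fact, trading a one-line appeal to Portmanteau for an explicit $\varepsilon$--grid argument. A small bonus of your route is that it never actually uses continuity of $b_0$---only continuity of $a_0$ (for the uniform approximation) and convergence of the total masses---so your argument works under a slightly weaker hypothesis than the lemma states, whereas the paper leans on continuity of $b_0$ to justify the weak-convergence step.
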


\begin{proof}
Let $(a_n, b_n)$ be a sequence in $A \times B$ such that $\norm{a_n - a_0} \to 0$ and $\norm{b_n - b_0} \to 0$. 
We have to show that $\int a_n \, \dd b_n \to \int a_0 \, \dd b_0$. 
Let $\beta_n$ and $\beta_0$ be the finite Borel measures on $[0, 1]^d$ associated with the c.d.f.s $b_n$ and $b_0$, respectively. By the triangle inequality,
$$
\left| \int a_n \, \dd b_n - \int a_0 \, \dd b_0 \right|
\le \left| \int a_n \, \dd b_n - \int a_0 \, \dd b_n \right|
+ \left| \int a_0 \, \dd b_n - \int a_0 \, \dd b_0 \right|.
$$
We treat the two terms on the right-hand side of the previous inequality separately.  

First, by uniform convergence and the continuity of $b_0$ on $[0,1]^d$, we have
$$
\beta_n([0, 1]^d) = b_n(1, \ldots, 1) \to b_0(1, \ldots, 1) = \beta_0([0, 1]^d) . 
$$
As a consequence,
$$
  \left| \int a_n \, \dd b_n - \int a_0 \, \dd b_n \right| \le \int | a_n - a_0 | \, \dd b_n \le \norm{a_n - a_0} \; \beta_n([0, 1]^d)
  \to 0. 
$$

Second, as $\norm{b_n - b_0} \to 0$, we 
have $\beta_n \to \beta_0$ in the topology of weak convergence of finite Borel measures. By continuity of the function $a_0$, this implies $\int a_0 \, \dd b_n \to \int a_0 \, \dd b_0$, 
as required.
\end{proof}

\begin{proof}[\bf Proof of Proposition~\ref{T_rn}]
The fact that $S_{r,n},S_{r,n}^{(1)},\dots,S_{r,n}^{(N)}$ jointly converge weakly to independent copies of the same limit is an immediate consequence of Proposition~\ref{multiplier} and the continuous mapping theorem. 

Let us show the corresponding result for $T_{r,n},T_{r,n}^{(1)},\dots,T_{r,n}^{(N)}$. Observe first that both $A$ and $B$, defined at the beginning of this Appendix, are subsets of the space $\ell^\infty([0, 1]^d)$. Next, the map $A^{N+1} \to A^{N+1} \times B : (a^{(1)},\dots,a^{(N+1)}) \mapsto (a^{(1)},\dots,a^{(N+1)},b)$ being continuous, we have, from Proposition~\ref{multiplier} and the continuous mapping theorem, that  
$$
\left(\Db_{r,n},\Db_{r,n}^{(1)}, \dots,\Db_{r,n}^{(N)}, C \right) \leadsto \left( \Db_r,\Db_r^{(1)},\dots,\Db_r^{(N)}, C \right).
$$ 
in $A^{N+1} \times B$. Since $\norm{C_n - C}$ converges to zero in probability, it follows that
\begin{equation}
\label{Drn+C}
\left(\Db_{r,n},\Db_{r,n}^{(1)}, \dots,\Db_{r,n}^{(N)}, C_n \right) \leadsto \left( \Db_r,\Db_r^{(1)},\dots,\Db_r^{(N)}, C \right).
\end{equation}
in $A^{N+1} \times B$.

Let $A_0 = \{ a \in A : \text{$a$ is continuous} \}$ and $B_0 = \{ b \in B : \text{$b$ is continuous} \}$. Copulas being continuous, $C$ belongs to $B_0$. From Proposition~\ref{limitH0}, we have that $\Db_r$ belongs to $A_0$ with probability one since the same is true for $\Cb$ defined in~(\ref{ec}). The limiting processes $\Db_r^{(1)},\dots,\Db_r^{(N)}$ also belong to $A_0$ with probability one since they are independent copies of $\Db_r$. 

By Lemma~\ref{lem:cont}, the map $\phi : A \times B \to \Rset$ is continuous at every point of $A_0 \times B_0$. It follows that the map $\psi : A^{N+1} \times B \to \Rset^{N+1}$ defined by
$$
\psi \left( a^{(1)},\dots,a^{(N+1)},b \right) = \left( \phi(a^{(1)},b),\dots,\phi(a^{(N+1)},b) \right)
$$
is continuous at every point of $A^{N+1}_0 \times B_0$. From~(\ref{Drn+C}) and the continuous mapping theorem, we finally obtain that
$$
\psi \left(\Db_{r,n},\Db_{r,n}^{(1)}, \dots,\Db_{r,n}^{(N)}, C_n \right) \leadsto \psi \left( \Db_r,\Db_r^{(1)},\dots,\Db_r^{(N)}, C \right),
$$
which is the desired result.
\end{proof}

\section{Estimators of the partial derivatives}
\label{estimators_partial_deriv}

In order to estimate the unknown partial derivative $C^{[j]}$, $j \in \{1,\dots,d\}$, besides the estimator $C^{[j]}_n$ defined in~(\ref{estimator}), one could use the estimator proposed by \citet{RemSca09}, and defined by  
\begin{multline*}
C^{[j]}_{n,RS}(\vec u) = \frac{1}{2 n^{-1/2}} \left\{ C_n (u _1,\dots,u_{j-1},u_{j,n}^+,u_{j+1},\dots,u_d ) \right. \\ \left. - C_n ( u_1,\dots,u_{j-1},u_{j,n}^-,u_{j+1},\dots,u_d ) \right\}, \qquad \vec u \in [0,1]^d,
\end{multline*}
where $u_{j,n}^+ = (u_j + n^{-1/2}) \wedge 1$, and $u_{j,n}^- = (u_j - n^{-1/2}) \vee  0$.

It is easy to verify that, for fixed $0 < a < b < 1$ and $n$ sufficiently large, $C^{[j]}_n$ and $C^{[j]}_{n,RS}$ coincide on $\{\vec u \in [0,1]^d : a \leq u_j \leq b \}$, and hence, from Lemma~\ref{uniform_convergence} below, if $C^{[j]}$ is continuous on the set $V_j$ defined in Condition ($\CC$), both estimators converge in probability to $C^{[j]}$ uniformly on $\{ \vec u \in [0,1]^d : a \leq u_j \leq b \}$. 

It could be argued that the following is a desirable property of an estimator of $C^{[j]}$: if $C^{[j]}$ happens to be continuous on $[0,1]^d$ instead of $V_j$, the estimator should converge in probability to $C^{[j]}$ uniformly on $[0,1]^d$. This property is satisfied by $C^{[j]}_n$ as is verified in Lemma~\ref{uniform_convergence} below. It is however not satisfied by $C^{[j]}_{n,RS}$ since the latter estimator does not converge pointwise in probability at points $\vec u$ of $[0,1]^d$ such that $u_j=0$ or $u_j=1$. 

\begin{lem}
\label{uniform_convergence}
Let $j \in \{1,\dots,d\}$, $0 < a < b < 1$, and assume that Condition ($\CC$) holds. Then, 
$$
 \sup_{\vec u \in [0,1]^d \atop u_j \in [a,b]} | C^{[j]}_n(\vec u) - C^{[j]}(\vec u)| \overset{\Pr}{\to} 0 .
$$
If, additionally, the partial derivative $C^{[j]}$ is continuous on $[0,1]^d$, then, 
$$
\sup_{\vec u \in [0,1]^d} | C^{[j]}_n(\vec u) - C^{[j]}(\vec u)| \overset{\Pr}{\to} 0 .
$$
\end{lem}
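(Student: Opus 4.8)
The plan is as follows. Fix $j \in \{1,\dots,d\}$ and, for $\vec u \in [0,1]^d$, write $\vec u_n^{+}$ (resp.\ $\vec u_n^{-}$) for the vector obtained from $\vec u$ by replacing its $j$-th coordinate with $u_{j,n}^{+}$ (resp.\ $u_{j,n}^{-}$), and put $h_n = u_{j,n}^{+} - u_{j,n}^{-}$. Let $\tilde C_n^{[j]}$ be the quantity obtained from~(\ref{estimator}) on replacing $C_n$ by the true copula $C$. One then uses the decomposition
$$
C_n^{[j]}(\vec u) - C^{[j]}(\vec u) = \bigl\{ C_n^{[j]}(\vec u) - \tilde C_n^{[j]}(\vec u) \bigr\} + \bigl\{ \tilde C_n^{[j]}(\vec u) - C^{[j]}(\vec u) \bigr\},
$$
and shows that the first (stochastic) term tends to zero in probability uniformly over $[0,1]^d$, while the second (bias) term tends to zero uniformly over $\{\vec u \in [0,1]^d : u_j \in [a,b]\}$ in general, and over all of $[0,1]^d$ under the additional assumption that $C^{[j]}$ is continuous on $[0,1]^d$. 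Adding the two bounds then gives both assertions.

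For the stochastic term, observe that since $\Cb_n = \sqrt{n}(C_n - C)$,
$$
C_n^{[j]}(\vec u) - \tilde C_n^{[j]}(\vec u) = \frac{\Cb_n(\vec u_n^{+}) - \Cb_n(\vec u_n^{-})}{\sqrt{n}\, h_n} =: g_n(\Cb_n)(\vec u), \qquad \vec u \in [0,1]^d .
$$
By the bounds $n^{-1/2} \le h_n \le 2 n^{-1/2}$ recalled in the proof of Proposition~\ref{multiplier} one has $1 \le \sqrt{n}\, h_n \le 2$, and the Euclidean distance between $\vec u_n^{+}$ and $\vec u_n^{-}$ equals $h_n \le 2 n^{-1/2}$. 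The next step is to apply the extended continuous mapping theorem \citep[][Theorem~1.11.1]{vanWel96}: whenever $\norm{f_n - f} \to 0$ with $f$ uniformly continuous for the Euclidean metric, $\norm{g_n(f_n)} \le 2\,\norm{f_n - f} + \omega_f(2 n^{-1/2}) \to 0$, where $\omega_f$ denotes the modulus of continuity of $f$. Since $\Cb_n \leadsto \Cb$ in $\ell^\infty([0,1]^d)$ by the result recalled in Section~\ref{empirical_copula_process}, it then remains to check that $\Cb$ has, with probability one, Euclidean-uniformly-continuous sample paths. This holds under Condition~($\CC$): the bridge $\alpha$ has continuous paths on the compact cube $[0,1]^d$, and in~(\ref{ec}) each term $C^{[k]}(\vec u)\,\alpha(1,\dots,u_k,\dots,1)$ is continuous on $[0,1]^d$ because $0 \le C^{[k]} \le 1$ throughout, $C^{[k]}$ is continuous wherever $u_k \in (0,1)$ by Condition~($\CC$), and $\alpha(1,\dots,u_k,\dots,1) \to 0$ as $u_k \downarrow 0$ or $u_k \uparrow 1$. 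Hence $g_n(\Cb_n) \leadsto 0$, i.e.\ $\sup_{\vec u \in [0,1]^d} |C_n^{[j]}(\vec u) - \tilde C_n^{[j]}(\vec u)| \overset{\Pr}{\to} 0$.

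For the bias term, the idea is to use that, for arbitrary fixed values of the coordinates other than the $j$-th, the map $t \mapsto C(u_1,\dots,u_{j-1},t,u_{j+1},\dots,u_d)$ is nondecreasing and $1$-Lipschitz, hence absolutely continuous; under Condition~($\CC$) its derivative equals $C^{[j]}$ at every $t \in (0,1)$ and is bounded by $1$, so that
$$
C(u_1,\dots,t,\dots,u_d) = \int_0^t C^{[j]}(u_1,\dots,s,\dots,u_d)\,\dd s , \qquad t \in [0,1] .
$$
Consequently $\tilde C_n^{[j]}(\vec u)$ is the average of $s \mapsto C^{[j]}(u_1,\dots,s,\dots,u_d)$ over the interval $[u_{j,n}^{-},u_{j,n}^{+}]$, whose length is at most $2 n^{-1/2}$ and which contains $u_j$. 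On $\{\vec u \in [0,1]^d : u_j \in [a,b]\}$ this interval is, for $n$ large enough, contained in a fixed compact subset $K$ of $V_j$ on which $C^{[j]}$ is uniformly continuous, whence $\sup_{u_j \in [a,b]} |\tilde C_n^{[j]}(\vec u) - C^{[j]}(\vec u)| \le \omega_K(2 n^{-1/2}) \to 0$; together with the stochastic-term bound this gives the first assertion. If $C^{[j]}$ is continuous on all of $[0,1]^d$, the same representation holds with a continuous integrand, $C^{[j]}$ is uniformly continuous on the compact set $[0,1]^d$, and the averaging bound yields $\sup_{\vec u \in [0,1]^d} |\tilde C_n^{[j]}(\vec u) - C^{[j]}(\vec u)| \to 0$; with the (already global) stochastic-term bound this gives the second assertion.

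The step that requires genuine care is the stochastic term: the crude sup-norm estimate only shows that $\sup_{\vec u} |C_n^{[j]}(\vec u) - \tilde C_n^{[j]}(\vec u)|$ is bounded in probability, so one really needs the asymptotic equicontinuity of the empirical copula process — equivalently, the Euclidean continuity of the sample paths of the limit $\Cb$ under Condition~($\CC$) — to upgrade this to a uniform bound that vanishes. Verifying that the products $C^{[k]}(\vec u)\,\alpha(1,\dots,u_k,\dots,1)$ appearing in~(\ref{ec}) are continuous on the closed cube, despite $C^{[k]}$ itself possibly being discontinuous where $u_k \in \{0,1\}$, is the one subtle point in that verification; the remaining computations are routine.
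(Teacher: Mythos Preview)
Your proof is correct and follows essentially the same decomposition as the paper's: split $C_n^{[j]} - C^{[j]}$ into a stochastic increment of $\Cb_n$ over an interval of length at most $2n^{-1/2}$ plus a deterministic bias term, then control the former via asymptotic equicontinuity of $\Cb_n$ (which you phrase through the extended continuous mapping theorem and the continuity of the paths of $\Cb$) and the latter via uniform continuity of $C^{[j]}$ on the relevant compact set. The only cosmetic differences are that the paper invokes the mean value theorem for the bias where you use the integral/averaging representation, and the paper cites asymptotic equicontinuity of $\Cb_n$ directly rather than verifying path continuity of $\Cb$; your boundary argument for the products $C^{[k]}(\vec u)\,\alpha(1,\dots,u_k,\dots,1)$ is exactly the point that makes this work.
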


\begin{proof}
Without loss of generality, fix $j=1$, and, for any $\vec u \in [0,1]^d$, let $\vec u_{-1}$ denote the vector $(u_2,\dots,u_d)$ of $[0,1]^{d-1}$. Also, let $\delta > 0$ be a real number such that $0 < \delta < a < b < 1-\delta < 1$, and let $n$ be sufficiently large such that, for any $x \in [a,b]$, $x \pm n^{-1/2} \in [\delta,1-\delta]$. Now, for any $\vec u \in [0,1]^d$ such that $u_1 \in [a,b]$, we can write
\begin{align*}
C^{[1]}_n(\vec u) =& \frac{1}{u_{1,n}^+ - u_{1,n}^-} \left\{ C(u_{1,n}^+,\vec u_{-1}) - C(u_{1,n}^-,\vec u_{-1}) \right\} \\ &+ \frac{1}{u_{1,n}^+ - u_{1,n}^-} \left\{ (C_n - C)(u_{1,n}^+,\vec u_{-1}) - (C_n - C)(u_{1,n}^-,\vec u_{-1}) \right\}.
\end{align*}
From the fact that $u_{1,n}^+ - u_{1,n}^- = 2n^{-1/2}$ for all $u_1 \in [a,b]$, we obtain that
\begin{multline}
\label{decomp}
\sup_{\vec u \in [0,1]^d \atop u_1 \in [a,b]} | C^{[1]}_n(\vec u) - C^{[1]}(\vec u)| \leq \sup_{\vec u \in [0,1]^d \atop u_1 \in [a,b]} \left| \frac{1}{u_{1,n}^+ - u_{1,n}^-} \left\{ C(u_{1,n}^+,\vec u_{-1}) - C( u_{1,n}^-,\vec u_{-1}) \right\} - C^{[1]}(\vec u) \right| \\ 
+ \sup_{\vec u \in [0,1]^d \atop u_1 \in [a,b]} \left| \Cb_n(u_{1,n}^+,\vec u_{-1}) - \Cb_n( u_{1,n}^-,\vec u_{-1}) \right|,
\end{multline}
where $\Cb_n = \sqrt{n}(C_n - C)$. Since $C^{[1]}$ exists and is continuous on the set $V_1 = \{ \vec u \in [0,1]^d : 0 < u_1 < 1\}$, from the mean value theorem, we have that
$$
\frac{1}{u_{1,n}^+ - u_{1,n}^-}  \left\{ C(u_{1,n}^+,\vec u_{-1}) - C(u_{1,n}^-,\vec u_{-1}) \right\} = C^{[1]}(u_{1,n}^*,\vec u_{-1}),
$$
where $u_{1,n}^* \in (u_{1,n}^-,u_{1,n}^+) \subset [\delta,1-\delta]$. It follows that 
\begin{multline*}
\sup_{\vec u \in [0,1]^d \atop u_1 \in [a,b]} \left| \frac{1}{u_{1,n}^+ - u_{1,n}^-} \left\{ C(u_{1,n}^+,\vec u_{-1}) - C( u_{1,n}^-,\vec u_{-1}) \right\} - C^{[1]}(\vec u) \right| \\ = \sup_{\vec u \in [0,1]^d \atop u_1 \in [a,b]} \left| C^{[1]}(u_{1,n}^*,\vec u_{-1}) - C^{[1]}(\vec u) \right| \leq \sup_{(u',\vec u) \in [0,1]^{d+1} \atop { u',u_1 \in [\delta,1-\delta] \atop  |u' - u_1| \leq n^{-1/2}}} \left| C^{[1]}(u',\vec u_{-1}) - C^{[1]}(\vec u) \right|.
\end{multline*}
The term on the right converges to zero as $n$ tends to infinity because $C^{[1]}$ is uniformly continuous on the set $\{ \vec u \in [0,1]^d : \delta \leq u_1 \leq 1 - \delta \}$. 

The fact that
\begin{equation}
\label{term2}
\sup_{\vec u \in [0,1]^d \atop u_1 \in [a,b]} \left| \Cb_n(u_{1,n}^+,\vec u_{-1}) - \Cb_n(u_{1,n}^-,\vec u_{-1}) \right|  \overset{\Pr}{\to} 0
\end{equation}
is a consequence of the asymptotic equicontinuity of the sequence $\Cb_n$, which follows from the weak convergence of $\Cb_n$ in $\ell^\infty([0,1]^d)$ to the Gaussian process with continuous paths $\Cb$ defined in~(\ref{ev}) \citep[see e.g.,][page 115 and Equation~(2.6)]{Kos08}.

To prove the second statement, notice that~(\ref{decomp}) with $a=0$ and $b=1$ holds for all $n \geq 1$ because $0 < 1/(u_{1,n}^+ - u_{1,n}^-) \leq \sqrt{n}$ for all $u_1 \in [0,1]$ and all $n \geq 1$. Then, applying the mean value theorem as previously, one obtains that 
\begin{multline*}
\sup_{\vec u \in [0,1]^d} \left| \frac{1}{u_{1,n}^+ - u_{1,n}^-} \left\{ C(u_{1,n}^+,\vec u_{-1}) - C( u_{1,n}^-,\vec u_{-1}) \right\} - C^{[1]}(\vec u) \right| \\ = \sup_{\vec u \in [0,1]^d} \left| C^{[1]}(u_{1,n}^*,\vec u_{-1}) - C^{[1]}(\vec u) \right| \leq \sup_{(u',\vec u) \in [0,1]^{d+1} \atop |u' - u_1| \leq n^{-1/2}} \left| C^{[1]}(u',\vec u_{-1}) - C^{[1]}(\vec u) \right|,
\end{multline*}
where $u_{1,n}^* \in (u_{1,n}^-,u_{1,n}^+) \subset [0,1]$. The term on the right of the previous display converges to zero as $n$ tends to infinity because $C^{[1]}$ is uniformly continuous on $[0,1]^d$. The desired results finally follows from the fact that the result stated in~(\ref{term2}) with $a=0$ and $b=1$ holds for the same reasons as previously.
\end{proof}

\small
\bibliographystyle{plainnat}
\bibliography{biblio}

\end{document}